\newtheorem{proposition}{Proposition}
\newtheorem{lemma}{Lemma}
\newtheorem{corollary}{Corollary}
\begin{document}

\title{Resource Allocation and Outage Analysis for An Adaptive Cognitive Two-Way Relay Network}

	\author{Qunwei~Li,~\IEEEmembership{Student Member,~IEEE},
		Pramod~K.~Varshney,~\IEEEmembership{Fellow,~IEEE}
		\thanks{This work was supported in part by the NSF under Grant ENG 1609916 and the AFOSR under Grant FA9550-16-1-0077.}
		\thanks{Q.\ Li and P.~K.\ Varshney are with the Department of Electrical Engineering and Computer Science, Syracuse University, Syracuse, NY 13244 USA
			(e-mail: qli33@syr.edu; varshney@syr.edu).}
}

\maketitle

\begin{abstract}
In this paper, an adaptive two-way relay cooperation scheme is studied for multiple-relay cognitive radio networks to improve the performance of secondary transmissions. The power allocation and relay selection schemes are derived to minimize the secondary outage probability where only statistical channel information is needed. Exact closed-form expressions for secondary outage probability are derived under a constraint on the quality of service of primary transmissions in terms of the required primary outage probability. To better understand the impact of primary user interference on secondary transmissions, we further investigate the asymptotic behaviors of the secondary relay network including power allocation and outage probability, when the primary signal-to-noise ratio goes to infinity. Simulation results are provided to illustrate the performance of the proposed schemes.
\end{abstract}

\begin{IEEEkeywords}
Two-way relay, cognitive radio networks, outage probability, power allocation, relay selection.
\end{IEEEkeywords}

\IEEEpeerreviewmaketitle

\section{Introduction}
\IEEEPARstart{C}{ognitive} radio techniques enable secondary users (SUs) to access the frequency bands originally licensed to primary users (PUs) while ensuring that the quality of service (QoS) of primary transmissions is not affected, which can improve spectral efficiency significantly \cite{haykin2005cognitive}. However, the SUs often operate with constrained transmit power to guarantee the QoS of PUs in terms of interference temperature, thus limiting the throughput and coverage of the secondary system. To combat this problem, cooperative diversity systems involving scattering relay networks have recently been researched to exploit the spatial diversity gain and to enhance the secondary channel performance \cite{jing2006distributed}, \cite{4786524}. It has also been shown that cooperative diversity with relay selection can achieve the same diversity-multiplexing tradeoff as achieved by the traditional cooperation protocols where all relays are involved in forwarding the signals from source nodes \cite{hunter2006outage}, \cite{zou2010adaptive}.

The conventional one-way relay scheme suffers from a loss in spectral efficiency because of half-duplex transmission \cite{laneman2004cooperative}. To circumvent this disadvantage, a two-way relay system was proposed in \cite{rankov2007spectral}. A two-way relay system has two transmission phases. During the first phase, two secondary transceivers (STs) simultaneously broadcast their signals. After successfully receiving the combined signals, the relay node forwards the signals to the two STs during the second phase. Since there are two different relaying paths, the total spectral efficiency of a two-way relay system can be doubled compared with a conventional one-way relay system. Two protocols for two-way relay networks, commonly known as decode-and-forward (DF) and amplify-and-forward (AF) relaying, were proposed in \cite{rankov2007spectral}. Based on these, several cooperative diversity schemes for two-way relay networks with relay selection have been proposed \cite{jing2009relay,li2010asymptotically,krikidis2010relay,talwar2011joint}. Note that all the aforementioned works studied non-cognitive radio networks. However, in practical cognitive radio systems, PUs and SUs can simultaneously transmit signals by sharing the same spectrum resources. As a result, the relays and secondary receivers inevitably suffer interference from PUs. From the viewpoint of SUs, these interferences come in the form of co-channel interference (CCI) and it is important to analyze their effect on system performance. 

\subsection{Related Work}
So far, the literature that studies outage performance and resource allocation in cognitive two-way relaying networks with CCI is relatively scarce. Interference was considered only during the second transmission phase in \cite{6403863}, where exact outage probability was obtained while ignoring the noise at the receivers. In \cite{zhang2015exact}, the exact outage probability was derived under a cognitive two-way relay network setting. However, the system outage event was defined as having either one of the two STs in outage, which simplifies the derivation but does not represent system outage correctly. In \cite{ubaidulla2012optimal}, a max-min strategy over instantaneous achievable channel rates was employed to address relay selection and power allocation for cognitive two-way AF relaying networks. The CCI from the PUs was modeled as Gaussian noise, which does not characterize the practical cognitive radio communication appropriately. Relay selection and power allocation schemes in cognitive two-way DF relaying network were studied for the first time to maximize the achieved sum rate in \cite{alsharoa2013relay}. However, the CCI was considered at primary nodes whereas the interference resulting from primary transmission in secondary receivers was not considered. In \cite{alsharoa2014optimal}, the power allocation problem in the cognitive two-way relay network with amplify-and-forward strategy was studied and the secondary sum rate was maximized whereas the optimization problem dealt with the terminal side without any control on relay parameters. Instantaneous secrecy rate was maximized in \cite{liu2015relay} for relay selection, which is the same as maximizing signal-to-noise ratio (SNR). Besides the inappropriate system modeling of CCI, the resource allocation schemes in the aforementioned works lead to the maximization of the instantaneous SNR. These resulting resource allocation schemes require perfect knowledge of instantaneous channel state information (CSI) between the nodes in the cognitive network. In fact, it is highly computationally complex and also sometimes impossible to accurately learn the knowledge of instantaneous CSI in the network. Moreover, in the cognitive radio network setting, the knowledge of instantaneous CSI for the primary interference transmitted from the primary network to the secondary network is required if those schemes are to be implemented, which is extremely difficult if there are no pilot symbols specifically designed for the secondary nodes in the primary signal. Therefore, optimal power allocation for outage probability minimization comes into consideration in such a scenario, which only requires the knowledge of statistical CSI \cite{ahmed2004outage,dulek2013optimum,huang2014optimal}.

\subsection{Main Contributions}	
In this paper, we investigate an adaptive cooperative diversity scheme in cognitive two-way relay networks using the DF protocol, where mutual interference between PUs and SUs is considered. The STs broadcast their signals to the relays and to each other through the direct link during the first phase. During the second phase, if the relays can decode the signals received during the first phase, the best relay is chosen to forward the signals to the STs; otherwise, the STs adaptively repeat the same transmission to each other through the direct link as during the first phase. Then, the STs combine the two copies of the received signals after the two transmission phases. 

{The main contributions of this paper are as follows:

 1) We explore the adaptive use of the direct link and the relay link to achieve higher system performance in cognitive DF two-way relaying networks. Our analysis can also be used for the scenario where only relay link is available.
 
 2) For the first time, a power allocation scheme for STs and the relays is developed that minimizes the secondary outage probability under a QoS constraint from the primary network, requiring no instantaneous CSI of the transmission links. 
 
 3) The optimal relay selection approach for this two-way system is also provided to minimize outage probability, which requires only statistical CSI. To the best of our knowledge, this is the first work to study the resource allocation problem using statistical CSI information for the proposed general framework. An exact closed-form expression for the secondary outage probability is also derived in this paper. Asymptotic behavior of the secondary system is analyzed given that the primary user SNR goes to infinity.}

The rest of the paper is organized as follows: we give the system model in Section II. Section III provides the outage analysis of the relaying network. Based on the outage analysis, we address the power allocation and relay selection problems in Section IV. We analyze the asymptotic behavior of the system in Section V. Numerical simulation results and conclusion are given in Section VI and Section VII.
\section{Proposed Adaptive Cooperation Scheme}
\subsection{System Model}
Consider a general spectrum-sharing cognitive two-way relaying network as shown in Fig. 1. In the primary network, a primary transmitter $u$ sends data to a primary destination $v$. Meanwhile, in the secondary relay network, STs $s$ and $d$ exchange information with each other. Secondary relays $r_i, i=1,2,3,\dots,M$, are available to assist secondary data transmissions using the DF protocol. We assume that the channel link from  $k$ to $j$ $(k,j \in \{ u,v,s,r_i,d\} )$ undergoes Rayleigh fading with instantaneous coefficient $h_{k,j}$. Therefore, the channel gain ${\left| {{h_{k,j}}} \right|^2}$ is exponentially distributed with mean $\frac{1}{\sigma_{k,j}^2}$. We also assume reciprocity of all the channels and zero-mean additive white Gaussian noise (AWGN) with variance $N_0$ at each receiver.
\begin{figure}[!t]
\centering
\includegraphics[width=2.5in]{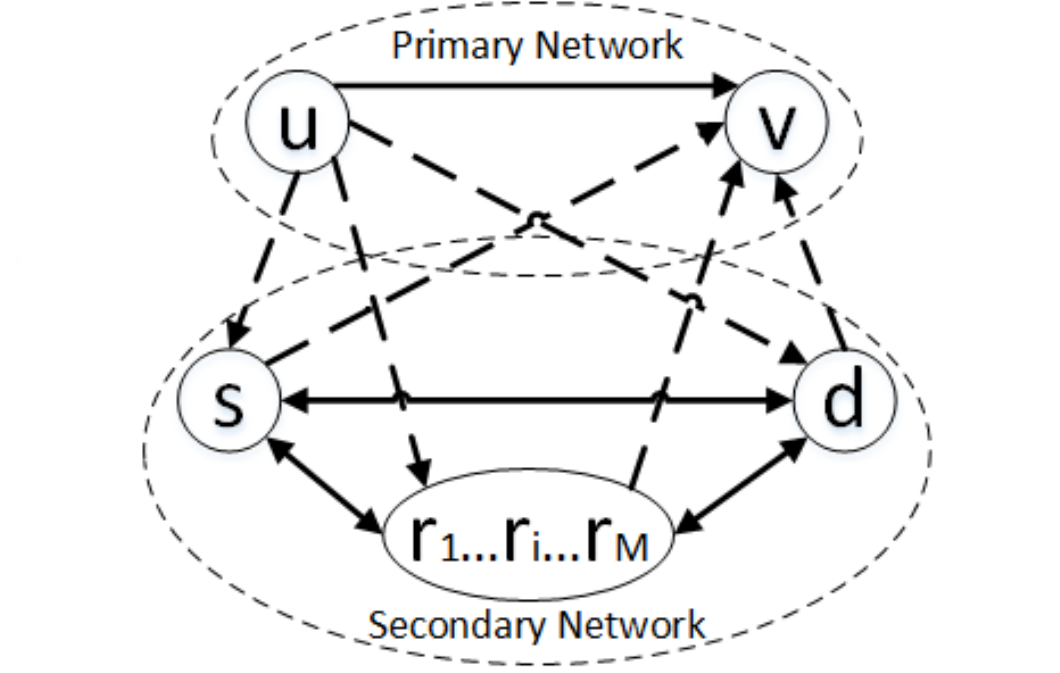}
\caption{\small System model of a cognitive two-way relaying network. The lines between the nodes denote data transmission or interference. Solid lines represent data transmission that takes place either during the first or the second phase. The dashed lines stand for co-channel interference between primary and secondary nodes.}
\label{f1}
\end{figure}

During the first phase, STs $s$ and $d$ simultaneously broadcast their signals to the relay $r_i$ and to the corresponding receiver, i.e., $s \to {r_i} \leftarrow d$, $s \leftrightarrow d$. By employing multiple antennas and self-interference cancelation (SIC), the STs can send and receive at the same time \cite{ju2009catching}. Thus, considering coexistence of primary transmission, the received signal at the primary receiver can be expressed as
\begin{align}
{y_v} = \sqrt {{{\mathbb P}_u}} {h_{u,v}}{x_u} + \sqrt {{{\mathbb P}_s}} {h_{s,v}}{x_s} + \sqrt {{{\mathbb P}_d}} {h_{d,v}}{x_d} + {n_v},
\end{align}
where ${{\mathbb P}_u}$, ${{\mathbb P}_s}$ and ${{\mathbb P}_d}$ are the transmit powers of $u$, $s$ and $d$ respectively, ${x_u}$, ${x_s}$ and ${x_d}$ denote the unit-mean-energy symbols transmitted respectively by $u$, $s$ and $d$, and ${n_v}$ is the AWGN. The QoS of primary transmissions is quantified by the outage probability in this paper. The primary QoS guarantee is represented by the inequality that the outage probability of primary transmission $P_{uv}$ does not exceed a predefined outage probability threshold $P_{th}$, which is expressed as
\begin{align}\label{1}
{P_{uv}} = &   P\left( \log _2 \left(1+{\frac{{{{\mathbb P}_u}{{\left| {{h_{u,v}}} \right|}^2}}}{{{{\mathbb P}_s}{{\left| {{h_{s,v}}} \right|}^2} + {{\mathbb P}_d}{{\left| {{h_{d,v}}} \right|}^2} + {N_0}}} } \right ) < R_u\right)  \nonumber \\
=&P\left( {\frac{{{{\mathbb P}_u}{{\left| {{h_{u,v}}} \right|}^2}}}{{{{\mathbb P}_s}{{\left| {{h_{s,v}}} \right|}^2} + {{\mathbb P}_d}{{\left| {{h_{d,v}}} \right|}^2} + {N_0}}} < \Delta_u} \right) \le {P_{th}},
\end{align}
where ${\Delta _u} = {2^{{R_u}}} - 1$ with $R_u$ being the primary data rate. We calculate $P_{uv}$ and write the primary QoS guarantee during the first phase as
\begin{align}\label{15}
{P_{uv}} = 1 -& \int_0^\infty  \int_0^\infty  \exp \left[ { - \frac{{{\Delta _u}}}{{{\gamma _u}\sigma _{u,v}^2}}\left( {{\gamma _s}x + {\gamma _d}y + 1} \right)} \right]\nonumber\\
&\times{f_{{{\left| {{h_{s,v}}} \right|}^2}}}\left( x \right){f_{{{\left| {{h_{d,v}}} \right|}^2}}}\left( y \right)dxdy \nonumber\\
 = 1 - &\frac{{{{\left( {{\gamma _u}\sigma _{u,v}^2} \right)}^2}\exp \left( { - \frac{{{\Delta _u}}}{{{\gamma _u}\sigma _{u,v}^2}}} \right)}}{{\left( {{\gamma _u}\sigma _{u,v}^2 + {\Delta _u}{\gamma _s}\sigma _{s,v}^2} \right)\left( {{\gamma _u}\sigma _{u,v}^2 + {\Delta _u}{\gamma _d}\sigma _{d,v}^2} \right)}} \le {P_{th}},
\end{align}
where ${\gamma _u} = {{\mathbb P}_u}/{N_0}$, ${\gamma _s} = {{\mathbb P}_s}/{N_0}$ and ${\gamma _d} = {{\mathbb P}_d}/{N_0}$ denote the transmit SNR at the primary transmitter, and STs $s$ and $d$, respectively.
In order to ensure \eqref{15}, we adopt the static power allocation scheme to guarantee the QoS for primary transmission. First we rewrite \eqref{15} and find the constraint in terms of the power of the STs $s$ and $d$ which is denoted by $\mathbf C$ where
\begin{align} 
{\mathbf C}:\ \ \left( {1 + \frac{{{\Delta _u}{{\mathbb P}_s}\sigma _{s,v}^2}}{{{{\mathbb P}_u}\sigma _{u,v}^2}}} \right)\left( {1 + \frac{{{\Delta _u}{{\mathbb P}_d}\sigma _{d,v}^2}}{{{{\mathbb P}_u}\sigma _{u,v}^2}}} \right) \le g,\end{align}
and $g = \max \{ {{\exp \left( { - \frac{{{\Delta _u}}}{{{\gamma _u}\sigma _{u,v}^2}}} \right)}}/{{(1 - {P_{th}})}},1\} $.

\subsection{Proposed Adaptive Cooperation with Relay Selection}
In this subsection, we focus on the adaptive relay cooperation scheme with relay selection. During the first phase, some relays may successfully decode the received signals, among which the best relay is chosen to forward the data to STs. First of all, the received signal during the first phase at $r_i$ is represented as
\begin{align}
{y_{{r_i}}} = \sqrt {{{\mathbb P}_u}} {h_{u,{r_i}}}{x_u} + \sqrt {{{\mathbb P}_s}} {h_{s,{r_i}}}{x_s} + \sqrt {{{\mathbb P}_d}} {h_{d,{r_i}}}{x_d} + {n_{{r_i}}},
\end{align}
where ${n_{{r_i}}}$ is the zero-mean AWGN with variance $N_0$. For convenience, we denote $D_M=\left\{r_1,r_2,\dots,r_M\right\}$ to be the set of all the relays and those relays that are able to successfully decode the received signals constitute a set $D$. { Therefore, $D$ is a dynamic relay set that depends on the decoding status of the relays. Note that the relay set $D$ determines whether a direct transmission between STs is needed or not. If it is not needed, the relay set $D$ determines the feasible relay that can be chosen to forward the signal. }  

During the second phase, if $D$ is empty, i.e., $D = \varnothing$, STs $s$ and $d$ will repeat the transmission of the original signals to each other through the direct link. In this case, with SIC and signal combination using maximum ratio combining (MRC) method, the SINR at each ST can be respectively expressed as
\begin{align}
SIN{R_s}(D = \varnothing) = \frac{{2{{\mathbb P}_d}{{\left| {{h_{d,s}}} \right|}^2}}}{{{{\mathbb P}_u}{{\left| {{h_{u,s}}} \right|}^2} + {N_0}}},\\
SIN{R_d}(D = \varnothing) = \frac{{2{{\mathbb P}_s}{{\left| {{h_{s,d}}} \right|}^2}}}{{{{\mathbb P}_u}{{\left| {{h_{u,d}}} \right|}^2} + {N_0}}}.
\end{align}
Otherwise, if $D$ is not empty, where $D = {D_S}$, the relay $r_{  i}$ chosen within $D_S$ will transmit its decoded data stream to the two STs. Finally, STs combine the two copies of the received signals using SIC and MRC methods. Therefore, the respective SINR is given as
\begin{align}
\label{16}
SIN{R_s} = \frac{{{{\mathbb P}_d}{{\left| {{h_{d,s}}} \right|}^2}}}{{{{\mathbb P}_u}{{\left| {{h_{u,s}}} \right|}^2} + {N_0}}} + \frac{{\beta_{  i} {{\mathbb P}_{{r_{  i}}}}{{\left| {{h_{{r_{  i}},s}}} \right|}^2}}}{{{{\mathbb P}_u}{{\left| {{h_{u,s}}} \right|}^2} + {N_0}}},\\
\label{17}
SIN{R_d} = \frac{{{{\mathbb P}_s}{{\left| {{h_{s,d}}} \right|}^2}}}{{{{\mathbb P}_u}{{\left| {{h_{u,d}}} \right|}^2} + {N_0}}} + \frac{{\alpha_{  i} {{\mathbb P}_{{r_{  i}}}}{{\left| {{h_{{r_{  i}},d}}} \right|}^2}}}{{{{\mathbb P}_u}{{\left| {{h_{u,d}}} \right|}^2} + {N_0}}}.
\end{align}
where ${{\mathbb P}_{{r_{  i}}}}$ is the transmit power of $r_{  i}$, and $\alpha_{  i}$ and $\beta_{  i}$ are the ratios of total transmit power at $r_{  i}$ for the transmission of original signals from $s$ and $d$ to $d$ and $s$, respectively.

\section{Outage Performance Analysis}
In this section, we give the analysis of the outage probability of the proposed adaptive relay cooperation scheme. The exact results of secondary outage probability are derived. Based on the results, we shall provide the resource allocation schemes.

We first study the outage in the relay nodes as defined in \eqref{2}.
According to the achievable rate region as discussed in \cite{kim2011achievable,6403863}, the event of each $r_i$ failing to decode the received signals and resulting in outage is denoted as $O(r_i)$ and can be expressed as
\begin{align}\label{2}
O(r_i)  =  \left\{ {{{\bar \gamma }_{s,{r_i}}}   +   {{\bar \gamma }_{d,{r_i}}}   <   \Delta \ \textrm{or}\ {{\bar \gamma }_{s,{r_i}}}   <  {\Delta _s}\ \textrm{or}\ {{\bar \gamma }_{d,{r_i}}}   <   {\Delta _d}} \right\},
\end{align}
where $\Delta=2^{2(R_s+R_d)}-1$, $\Delta_s=2^{2R_s}-1$, $\Delta_d =2^{2R_d}-1$ with $R_s$ and $R_d$ being the data rates at STs $s$ and $d$, respectively, ${{\bar \gamma }_{s,{r_i}}} = {\gamma _s}{\left| {{h_{s,{r_i}}}} \right|^2}/\left( {{\gamma _u}{{\left| {{h_{u,{r_i}}}} \right|}^2} + 1} \right)$ and ${{\bar \gamma }_{d,{r_i}}} = {\gamma _d}{\left| {{h_{d,{r_i}}}} \right|^2}/\left( {{\gamma _u}{{\left| {{h_{u,{r_i}}}} \right|}^2} + 1} \right)$ are correlated and represent the signal-to-interference-plus-noise ratio (SINR) at $r_i$ with respect to signals from $s$ and $d$ respectively.
\begin{proposition}\label{pro1}
	The outage probability of each relay $r_i$ is given as
	\begin{align}\label{5}
	&P(O(r_i)) \nonumber\\
	& =  \left\{ {\begin{array}{*{20}{c}}
		{    1  -  \frac{{T\exp \left( { - \frac{\Delta }{{{\gamma _d}\sigma _{d,{r_i}}^2}}} \right)}}{{{\gamma _u}\sigma _{u,{r_i}}^2}}  \left[ { 1  +  \frac{{{\Delta _s}{\Delta _d}\left( {1 + T} \right)}}{{{\gamma _d}\sigma _{d,{r_i}}^2}}}  \right] ,{\textrm{if}}\;\frac{{{\gamma _d}}}{{{\gamma _s}}}  =  \frac{{\sigma _{s,{r_i}}^2}}{{\sigma _{d,{r_i}}^2}}}\\
		{1 - \frac{{C\exp \left( { - A} \right)}}{{A{\gamma _u}\sigma _{u,{r_i}}^2 + 1}} - \frac{{\left( {1 - C} \right)\exp \left( { - B} \right)}}{{B{\gamma _u}\sigma _{u,{r_i}}^2 + 1}},\textrm{otherwise}}
		\end{array}} \right.,
	\end{align}
	where $T = ({{\frac{\Delta }{{{\gamma _d}\sigma _{d,{r_i}}^2}} + \frac{1}{{{\gamma _u}\sigma _{u,{r_i}}^2}}}})^{-1}$, $A  =  \frac{{\Delta   -  {\Delta _d}}}{{{\gamma _s}\sigma _{s,{r_i}}^2}}  +  \frac{{{\Delta _d}}}{{{\gamma _d}\sigma _{d,{r_i}}^2}}$, $B  =  \frac{{{\Delta _s}}}{{{\gamma _s}\sigma _{s,{r_i}}^2}}  +  \frac{{\Delta   -  {\Delta _s}}}{{{\gamma _d}\sigma _{d,{r_i}}^2}}$ and $C  =   \frac{{{\gamma _s}\sigma _{s,{r_i}}^2}}{{{\gamma _s}\sigma _{s,{r_i}}^2  - {\gamma _d}\sigma _{d,{r_i}}^2}}$.
\end{proposition}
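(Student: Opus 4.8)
The plan is to compute the complementary (successful-decoding) probability $P(\overline{O(r_i)})$ and then set $P(O(r_i)) = 1 - P(\overline{O(r_i)})$. By \eqref{2} the success event is the intersection $\{\bar\gamma_{s,r_i} \geq \Delta_s\} \cap \{\bar\gamma_{d,r_i} \geq \Delta_d\} \cap \{\bar\gamma_{s,r_i} + \bar\gamma_{d,r_i} \geq \Delta\}$, i.e.\ the pentagon-shaped decoding region of a two-user DF multiple-access channel. The first thing I would record is the algebraic identity linking the three thresholds: since $\Delta = 2^{2(R_s+R_d)}-1$, $\Delta_s = 2^{2R_s}-1$ and $\Delta_d = 2^{2R_d}-1$, we have $1 + \Delta = (1+\Delta_s)(1+\Delta_d)$, hence $\Delta = \Delta_s + \Delta_d + \Delta_s\Delta_d$. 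This relation makes the sum constraint nontrivial (it sits strictly above $\Delta_s + \Delta_d$) and is what simplifies the region later.

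The key difficulty is that $\bar\gamma_{s,r_i}$ and $\bar\gamma_{d,r_i}$ are correlated through their common denominator $\gamma_u|h_{u,r_i}|^2 + 1$. I would remove this correlation by conditioning on the interfering gain $|h_{u,r_i}|^2 = z$ and writing $w = \gamma_u z + 1$. Given $w$, the numerators $a = \gamma_s|h_{s,r_i}|^2$ and $b = \gamma_d|h_{d,r_i}|^2$ are independent exponential variables, and the success region becomes $a \geq \Delta_s w$, $b \geq \Delta_d w$, $a + b \geq \Delta w$. Shifting $a' = a - \Delta_s w$ and $b' = b - \Delta_d w$ turns the constraints into $a',b' \geq 0$ together with $a' + b' \geq (\Delta - \Delta_s - \Delta_d)w = \Delta_s\Delta_d\, w$, so the conditional success probability factors as an exponential in $w$ times $P(a' + b' \geq \Delta_s\Delta_d w)$, a tail probability of a sum of two independent exponentials. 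Evaluating this tail splits into two cases according to whether the two rates coincide: the generic case $\gamma_s\sigma_{s,r_i}^2 \neq \gamma_d\sigma_{d,r_i}^2$ gives the hypoexponential tail (two exponentials weighted by $\pm C$, producing the constants $A$, $B$, $C$), while the degenerate case $\gamma_d/\gamma_s = \sigma_{s,r_i}^2/\sigma_{d,r_i}^2$ produces an Erlang/Gamma$(2)$ tail and the extra linear factor $1 + (\text{const})\cdot w$ responsible for the $(1+T)$ term.

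Finally I would uncondition by integrating the conditional success probability against the exponential density of $|h_{u,r_i}|^2$. Because each conditional term has the form $Ke^{-\mu w} = Ke^{-\mu}e^{-\mu\gamma_u z}$ (or that times a factor linear in $z$), the integral $\int_0^\infty e^{-\mu\gamma_u z}(\cdots)\,dz$ is an elementary Laplace-type integral; this is exactly what generates the denominators $A\gamma_u\sigma_{u,r_i}^2 + 1$ and $B\gamma_u\sigma_{u,r_i}^2 + 1$, and in the degenerate case the quantity $T$ together with the $T^2$ contribution. Collecting terms and subtracting from $1$ yields the two branches of \eqref{5}. I expect the main obstacle to be the bookkeeping of the conditional pentagon integral: performing the variable shift so that the sum constraint collapses to a single sum-of-exponentials tail, and cleanly separating the equal-rate from the distinct-rate case so that the coefficients line up as $C$ and $1-C$.
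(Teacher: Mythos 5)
Your proposal is correct and reproduces exactly the conditional expression and both branches of \eqref{5}; it shares the paper's outer skeleton but executes the key inner step by a genuinely different technique. Like the paper (Appendix~\ref{prop1}), you first condition on the common interference variable $z=\gamma_u|h_{u,r_i}|^2+1$ to make the two SINR numerators conditionally independent, and you finish with the same elementary Laplace-type unconditioning integral against the exponential density of $|h_{u,r_i}|^2$, which is what generates the denominators $A\gamma_u\sigma_{u,r_i}^2+1$ and $B\gamma_u\sigma_{u,r_i}^2+1$ and, in the degenerate branch, the $T(1+T)$ structure. The difference lies in how the conditional probability over the decoding region is evaluated: the paper conditions further on $x=\gamma_s|h_{s,r_i}|^2$, writes the conditional outage probability piecewise over the three intervals $x<\Delta_s z$, $\Delta_s z<x<(\Delta-\Delta_d)z$, $x>(\Delta-\Delta_d)z$, and integrates term by term to reach \eqref{3}; you instead pass to the complementary success event and use memorylessness of the exponential to shift out the two individual constraints, so that the pentagon probability factors as
\begin{align}
e^{-\frac{\Delta_s z}{\gamma_s\sigma_{s,r_i}^2}}\,e^{-\frac{\Delta_d z}{\gamma_d\sigma_{d,r_i}^2}}\,
P\bigl(a'+b'\ge(\Delta-\Delta_s-\Delta_d)z\bigr),\nonumber
\end{align}
with $a',b'$ fresh independent exponentials, and then invoke the standard hypoexponential tail (unequal means, yielding $C$ and $1-C$, and, via the identity $\Delta-\Delta_s-\Delta_d=\Delta_s\Delta_d$, the exponents $A$ and $B$) or the Erlang tail (equal means, yielding the linear-in-$z$ factor behind the $(1+T)$ term). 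Your route buys transparency: the case split in \eqref{5} appears as the natural Erlang-versus-hypoexponential dichotomy rather than as an artifact of piecewise integration limits, and the constants $A$, $B$, $C$ are identified with essentially no bookkeeping. The paper's route is more self-contained (only direct integration of exponentials, no named distributional facts) and makes explicit the intermediate conditional expression \eqref{3}, which the paper reuses later for the high-SNR approximation \eqref{app} in Appendix~\ref{pa_pspd}.
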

\begin{proof}
	See Appendix \ref{prop1}.
\end{proof}

	As we can see from the proposition,	the outage probability $P(O(r_i))$ is dependent on the transmit powers of the networks, data rates, and the statistical conditions of the channels linked to the relay $r_i$. Note that only the relays that are not in outage can be chosen to forward the signals to the STs. Depending on the channel coefficient $\frac{\sigma^2_{s,r_i}}{\sigma^2_{d,r_i}}$, the outage probability $P(O(r_i))$ takes different forms of expressions. Thus, further analysis in this paper that is based on $P(O(r_i))$ is conducted in a case-by-case fashion.

Now we study the outage behavior of secondary system under the condition that the relay node $r_i$ is chosen and can successfully forward the signals to the STs $s$ and $d$.
Specifically, if $r_{  i}$ forwards the signals as the relay, the secondary system is in outage if at least one of the two STs can not successfully decode the received signal. Let $O({\rm ST}|r_{  i})$ denote the corresponding outage event.
\begin{align}\label{stout}
P(O({\rm ST}|r_{  i})) = P\left( {SIN{R_s} < {\Delta _d}\ {\rm{or}}\ SIN{R_d} < {\Delta _s}} \right),
\end{align}
where $SNIR_s$ and $SINR_d$ are defined in \eqref{16} and \eqref{17}, respectively.
The STs are in outage if they can not receive and decode the signals as is implied by \eqref{stout}.
\begin{proposition}
	In the high SNR regime, i.e., when $N_0 \to 0$, the probability that the STs are in outage is given as
	\begin{align}\label{pop}
	P(O({\rm ST}|r_{  i})) =\mathbb{A}+\mathbb{B}-\mathbb{A}\mathbb{B},
	\end{align}
	where
	$
	\mathbb{A}=\frac{1}{{1 + \frac{{{{\mathbb P}_d}\sigma _{d,s}^2}}{{{{\mathbb P}_u}{\Delta _d}\sigma _{u,s}^2}}}}\frac{1}{{1 + \frac{{{\beta _{  i}}{{\mathbb P}_{{r_{  i}}}}\sigma _{{r_{  i}},s}^2}}{{{{\mathbb P}_u}{\Delta _d}\sigma _{u,s}^2}}}},
	$
	and
	$
	\mathbb{B}=\frac{1}{{1 + \frac{{{{\mathbb P}_s}\sigma _{s,d}^2}}{{{{\mathbb P}_u}{\Delta _s}\sigma _{u,d}^2}}}}\frac{1}{{1 + \frac{{{\alpha _{  i}}{{\mathbb P}_{{r_{  i}}}}\sigma _{{r_{  i}},d}^2}}{{{{\mathbb P}_u}{\Delta _s}\sigma _{u,d}^2}}}}.
	$
\end{proposition}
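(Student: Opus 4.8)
The plan is to express the outage event as a union of two simpler events and reduce everything to two one-dimensional probabilities plus an independence argument. From \eqref{stout} we have $O(\mathrm{ST}\mid r_i)=\{SINR_s<\Delta_d\}\cup\{SINR_d<\Delta_s\}$, so inclusion--exclusion gives $P(O(\mathrm{ST}\mid r_i))=\mathbb{A}+\mathbb{B}-P(\{SINR_s<\Delta_d\}\cap\{SINR_d<\Delta_s\})$, where I set $\mathbb{A}:=P(SINR_s<\Delta_d)$ and $\mathbb{B}:=P(SINR_d<\Delta_s)$. The announced form $\mathbb{A}+\mathbb{B}-\mathbb{A}\mathbb{B}$ immediately signals that the final step must show the two outage events are independent, so that the joint term splits as $\mathbb{A}\mathbb{B}$.

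First I would compute $\mathbb{A}$. Sending $N_0\to0$ in \eqref{16} deletes the additive noise and turns $SINR_s$ into the pure ratio $\big(\mathbb{P}_d|h_{d,s}|^2+\beta_i\mathbb{P}_{r_i}|h_{r_i,s}|^2\big)/\big(\mathbb{P}_u|h_{u,s}|^2\big)$, so that $\{SINR_s<\Delta_d\}=\{\mathbb{P}_d|h_{d,s}|^2+\beta_i\mathbb{P}_{r_i}|h_{r_i,s}|^2<\Delta_d\mathbb{P}_u|h_{u,s}|^2\}$. The cleanest evaluation is to condition on the two numerator gains and exploit the exponential tail of the interference gain: for a fixed numerator value $t$, $P(\Delta_d\mathbb{P}_u|h_{u,s}|^2>t)=\exp\!\big(-t/(\Delta_d\mathbb{P}_u\,\mathbb{E}[|h_{u,s}|^2])\big)$. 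Since $|h_{d,s}|^2$ and $|h_{r_i,s}|^2$ are independent, the residual expectation factors into a product of two exponential moment-generating functions, each of the form $\mathbb{E}[e^{-sX}]=(1+s\,\mathbb{E}[X])^{-1}$; substituting the channel means reproduces exactly the two factors of $\mathbb{A}$, and the same computation applied to \eqref{17} with the roles $s\leftrightarrow d$, $\Delta_d\leftrightarrow\Delta_s$, $\beta_i\leftrightarrow\alpha_i$ interchanged yields $\mathbb{B}$. It is worth stressing that the $N_0\to0$ limit is what leaves a pure ratio and hence this clean product form; for $N_0>0$ the additive shift $\Delta_d N_0$ on the right-hand side would spoil the factorization.

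The last and most delicate step is to justify that $\{SINR_s<\Delta_d\}$ and $\{SINR_d<\Delta_s\}$ are independent, which is what turns the joint probability into $\mathbb{A}\mathbb{B}$. The first event is governed by the gains of the channels terminating at $s$, namely $|h_{d,s}|^2,|h_{u,s}|^2,|h_{r_i,s}|^2$, and the second by those terminating at $d$, namely $|h_{s,d}|^2,|h_{u,d}|^2,|h_{r_i,d}|^2$; the interference gains ($u\to s$ versus $u\to d$) and the relay gains ($r_i\to s$ versus $r_i\to d$) are distinct and independent. The main obstacle is the shared direct link: under strict reciprocity $|h_{d,s}|^2=|h_{s,d}|^2=:g$, and conditioning on $g$ shows the two events are only conditionally independent, so the joint probability equals $\mathbb{E}_g[\,P(SINR_s<\Delta_d\mid g)\,P(SINR_d<\Delta_s\mid g)\,]$, which exceeds $\mathbb{A}\mathbb{B}$ because both conditional probabilities decrease in $g$. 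I would therefore make explicit that the two direction-specific direct-link gains are treated as independent realizations; under that assumption the two events are genuinely independent, and $P(O(\mathrm{ST}\mid r_i))=\mathbb{A}+\mathbb{B}-\mathbb{A}\mathbb{B}$, as in \eqref{pop}, follows at once from inclusion--exclusion.
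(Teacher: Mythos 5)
Your proposal is correct and follows essentially the same route as the paper's Appendix B: inclusion--exclusion over the two single-terminal outage events, a conditioning/moment-generating-function evaluation of the marginals (which the paper states without derivation), and independence of the two events to reduce the joint term to $\mathbb{A}\mathbb{B}$. If anything you are more careful than the paper, which (i) defines $\mathbb{A}$ and $\mathbb{B}$ in its appendix with the inequality $>$ even though the displayed closed forms are exactly the outage probabilities $P(SINR_s<\Delta_d)$ and $P(SINR_d<\Delta_s)$ as you define them, and (ii) silently treats $|h_{d,s}|^2$ and $|h_{s,d}|^2$ as independent despite its stated reciprocity assumption---precisely the caveat you rightly flag as necessary for the factorization of the joint term.
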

\begin{IEEEproof}
	See Appendix \ref{prop2}.
\end{IEEEproof}

The probability $P(O({\rm ST}|r_{  i}))$ characterizes the outage property of the secondary system when the relay node $r_{i}$ is chosen to forward the signals to the STs. 
From the expressions of $\mathbb A$ and $\mathbb B$, we observe that the choice of relay has an influence on the secondary system outage performance by the following means: the forward power ratios $\alpha_i$ and $\beta_i$, transmit power of the relay $\mathbb P_{r_i}$, and the channel conditions of the links between the relay and the transceivers $\sigma^2_{r_i,s}$ and $\sigma^2_{r_i,d}$.

Here, we provide the exact probability that the secondary system is in outage.

With \eqref{5}, the probability of the case ${D = \varnothing }$ can be simply given as
\begin{align}\label{10}
P\left( {D = \varnothing } \right) = \prod\limits_{i = 1}^M {P(O(r_i))},
\end{align}
and the outage probability of the secondary network given this case is expressed as
\begin{align}\label{11}
&P({\rm{out}}\left| {D  =  \varnothing }  \right.) \nonumber \\
& =  P \left(   {\frac{{2{\gamma _d}{{\left| {{h_{d,s}}} \right|}^s}}}{{{\gamma _u}{{\left| {{h_{u,s}}} \right|}^s}   +  1}}   <   {\Delta _d}\ \textrm{or}\ \frac{{2{\gamma _s}{{\left| {{h_{s,d}}} \right|}^s}}}{{{\gamma _u}{{\left| {{h_{u,d}}} \right|}^s}   +  1}}   <   {\Delta _s}}   \right)\nonumber\\
&= 1  -  \frac{{2{\gamma _d}\sigma _{d,s}^2 \exp \left( {  - \frac{{{\Delta _d}}}{{2{\gamma _d}\sigma _{d,s}^2}}}  \right)}}{{2{\gamma _d}\sigma _{d,s}^2  +  {\Delta _d}{\gamma _u}\sigma _{u,s}^2}}\frac{{2{\gamma _s}\sigma _{s,d}^2\exp \left( { - \frac{{{\Delta _s}}}{{2{\gamma _s}\sigma _{s,d}^2}}} \right)}}{{2{\gamma _s}\sigma _{s,d}^2  +  {\Delta _s}{\gamma _u}\sigma _{u,d}^2}}.
\end{align}
Similarly, the occurrence probability of the case ${D = D_S }$ is
\begin{align}\label{9}
P\left( {D = {D_S}} \right) = \prod\limits_{{r_i} \in {D_S}} {\left[ {1 - P(O(r_i))} \right]} \prod\limits_{{r_i} \in \overline {{D_S}} } {P(O(r_i))},
\end{align}
where $\overline {{D_S}}=D_M-D_S$ is the complementary set to $D_S$. Based on \eqref{6}, we can also derive the conditional secondary outage probability in this case as
\begin{align}\label{7}
&P\left( {{\rm{out}}|D = {D_S}} \right) \nonumber \\
&= \int\limits_{\mathbb X} {\int\limits_{\mathbb Y} {P\left( {{\rm{out}}|D = {D_S},{\mathbb X},{\mathbb Y}} \right)f\left( {\mathbb X} \right)f\left( {\mathbb Y} \right)d{\mathbb X}d{\mathbb Y}} } ,
\end{align}
with
$
P\left( {{\rm{out}}|D = {D_S},{\mathbb X},{\mathbb Y}} \right) = \prod\limits_{{r_i} \in {D_S}} {\left( {1 - \Phi } \right)} 
$
where 
$
\Phi  = [1 - P(\beta_i {\gamma _{{r_i}}}{\left| {{h_{{r_i},s}}} \right|^2} < {\mathbb {X}})][1 - P(\alpha _i{\gamma _{{r_i}}}{\left| {{h_{{r_i},d}}} \right|^2} < {\mathbb {Y}}),
$
and
$
{\mathbb {X}}={\Delta _d}{\gamma _u}{\left| {{h_{u,s}}} \right|^2} + {\Delta _d} - {\gamma _d}{\left| {{h_{d,s}}} \right|^2}, \\
{\mathbb {Y}}={\Delta _s}{\gamma _u}{\left| {{h_{u,d}}} \right|^2}+ {\Delta _s} - {\gamma _s}{\left| {{h_{s,d}}} \right|^2}. 
$
The PDFs $f({\mathbb {X}})$ and $f({\mathbb {Y}})$ are given in Appendix \ref{mulpdf}.
Then, taking account of various integral intervals and binomial expansion, we derive the expression of $P\left( {{\rm{out}}|D = {D_S},  {\mathbb X},  {\mathbb Y}} \right)$ as in \eqref{8}.
\begin{figure*}[ht]
	\normalsize
	\begin{align}\label{8}
	P\left( {{\rm{out}}|D = {D_S},  {\mathbb X},  {\mathbb Y}} \right) = \left\{ {\begin{array}{*{20}{c}}
		0&{  {\mathbb X} < 0,  {\mathbb Y} < 0}\\
		{1 + \sum\limits_{{D_C} \in {D_S}} {{{\left( { - 1} \right)}^E}} \exp \left( {\sum\limits_{{r_i} \in {D_C}} {\frac{{ -   {\mathbb Y}}}{{{\alpha _i}{\gamma _{{r_i}}}\sigma _{{r_i},d}^2}}} } \right)}&{  {\mathbb X} < 0,  {\mathbb Y} > 0}\\
		{1 + \sum\limits_{{D_C} \in {D_S}} {{{\left( { - 1} \right)}^E}} \exp \left( {\sum\limits_{{r_i} \in {D_C}} {\frac{{ -   {\mathbb X}}}{{{\beta _i}{\gamma _{{r_i}}}\sigma _{{r_i},s}^2}}} } \right)}&{  {\mathbb X} > 0,  {\mathbb Y} < 0}\\
		{1 + \sum\limits_{{D_C} \in {D_S}} {{{\left( { - 1} \right)}^E}} \exp \left[ {\sum\limits_{{r_i} \in {D_C}} {\left( {\frac{{ -   {\mathbb X}}}{{{\beta _i}{\gamma _{{r_i}}}\sigma _{{r_i},s}^2}} - \frac{  {\mathbb Y}}{{{\alpha _i}{\gamma _{{r_i}}}\sigma _{{r_i},d}^2}}} \right)} } \right]}&{  {\mathbb X} > 0,  {\mathbb Y} > 0}
		\end{array}} \right.,
	\end{align}
	where $D_C$ is the non-empty subset of $D_S$ with $E$ elements. 
	
	\hrulefill \vspace*{4pt}
\end{figure*} 

Substituting \eqref{8} into \eqref{7}, we have
\begin{align}\label{12}
&P \left( {{\rm{out}} \left| {D  =  {D_s}} \right.}  \right)  =  1  \nonumber \\
&+  \frac{{\sum\limits_{{D_C} \in {D_S}}  {{\left( { - 1} \right)}^E}\left( {\Omega \Xi  + \Omega \Psi  + \Xi \Lambda  - \Lambda \Psi } \right)}}{{\left(  {{\Delta _d}{\gamma _u}\sigma _{u,s}^2  +  {\gamma _d}\sigma _{d,s}^2}  \right)  \left(  {{\Delta _s}{\gamma _u}\sigma _{u,d}^2  +  {\gamma _s}\sigma _{s,d}^2}  \right)}},
\end{align}
where
\begin{align}
\Omega  = &\frac{{\exp \left( { - \frac{{{\Delta _d}}}{{{\gamma _d}\sigma _{d,s}^2}}} \right) - \exp \left( { - \sum\limits_{{r_i} \in {D_C}} {\frac{{{\Delta _d}}}{{\beta_i {\gamma _{{r_i}}}\sigma _{{r_i},s}^2}}} } \right)}}{{\sum\limits_{{r_i} \in {D_C}} {\frac{1}{{\beta_i {\gamma _{{r_i}}}\sigma _{{r_i},s}^2}}}  - \frac{1}{{{\gamma _d}\sigma _{d,s}^2}}}} \nonumber\\
&+ \frac{{\exp \left( { - \sum\limits_{{r_i} \in {D_C}} {\frac{{{\Delta _d}}}{{\beta_i {\gamma _{{r_i}}}\sigma _{{r_i},s}^2}}} } \right)}}{{\sum\limits_{{r_i} \in {D_C}} {\frac{1}{{\beta_i {\gamma _{{r_i}}}\sigma _{{r_i},s}^2}}}  + \frac{1}{{{\Delta _d}{\gamma _u}\sigma _{u,s}^2}}}},
\end{align}
\begin{align}
\Xi  =& \frac{{\exp \left( { - \frac{{{\Delta _s}}}{{{\gamma _s}\sigma _{s,d}^2}}} \right) - \exp \left( { - \sum\limits_{{r_i} \in {D_C}} {\frac{{{\Delta _s}}}{{\alpha_i {\gamma _{{r_i}}}\sigma _{{r_i},d}^2}}} } \right)}}{{\sum\limits_{{r_i} \in {D_C}} {\frac{1}{{\alpha_i {\gamma _{{r_i}}}\sigma _{{r_i},d}^2}}}  - \frac{1}{{{\gamma _s}\sigma _{s,d}^2}}}} \nonumber\\
&+ \frac{{\exp \left( { - \sum\limits_{{r_i} \in {D_C}} {\frac{{{\Delta _s}}}{{\alpha_i {\gamma _{{r_i}}}\sigma _{{r_i},d}^2}}} } \right)}}{{\sum\limits_{{r_i} \in {D_C}} {\frac{1}{{\alpha_i {\gamma _{{r_i}}}\sigma _{{r_i},d}^2}}}  + \frac{1}{{{\Delta _s}{\gamma _u}\sigma _{u,d}^2}}}},
\end{align}
\begin{align}
\Lambda  =& {\gamma _d}\sigma _{d,s}^2\exp \left( {  \frac{{{-\Delta _d}}}{{{\gamma _d}\sigma _{d,s}^2}}} \right),\Psi  = {\gamma _s}\sigma _{s,d}^2\exp \left( {  \frac{{{-\Delta _s}}}{{{\gamma _s}\sigma _{s,d}^2}}} \right).
\end{align}
Finally, we derive the outage probability of the secondary two-way relay network as
\begin{align}
{P_{{\rm{out}}}}    =    & P  \left(  {{\rm{out}}|D   =   \varnothing }  \right)  P  \left(  {D   =   \varnothing }  \right)    \nonumber \\
& +                \sum\limits_{{D_S} \in {D_M}}               {P  \left(  {{\rm{out}}|D   =   {D_S}}  \right)  P  \left( {D   =   {D_S}} \right)},
\end{align}
where $P \left(  {D  =  \varnothing }  \right)$, $P \left(  {{\rm{out}}|D  =  \varnothing }  \right)$, $P  \left( {D  =  {D_S}} \right)$ and $P \left(  {{\rm{out}}|D  =  {D_S}}  \right)$ are given in \eqref{10}, \eqref{11}, \eqref{9} and \eqref{12}, respectively.

\section{Power Allocation and Relay Selection}

In the following, we optimize the outage performance of the secondary receivers in the relay network to address the problems of power allocation and relay selection.
In the context of power allocation for the DF relaying network, we have to determine the powers of STs $s$ and $d$, represented by ${\mathbb P}_s$ and ${\mathbb P}_d$, power of the relay $r_i$, represented by ${\mathbb P}_{r_i}$, and the power ratio for transmission of different data streams at the relay, represented by $\alpha_i$ and $\beta_i$. 

{Next, we determine the power allocation for the STs, i.e., ${\mathbb P}_s$ and ${\mathbb P}_d$.
First, note that the quality of the direct link between the STs may be severely affected due to long distance. This also partially constitutes the reason to employ relays since the links between the STs and the relays are relatively of much higher quality as well as providing diversity. To effectively make use of the relay channel diversity to enhance system performance, we maximize the minimum probability that the link between the STs and a relay is connected, while ${\mathbb P}_d$ and ${\mathbb P}_s$ satisfy constraint $\bf C$, which can be expressed as}
\begin{align}\label{opt}
\left\{ {{{\mathbb P}_s},{{\mathbb P}_d}} \right\} &= \arg \mathop {\max }\limits_{{\rm{subject\ to}}\;{\bf{C}}} \left\{ {\min \left\{ {1 - P(O(r_i)) } \right\}} \right\}\nonumber\\
& = \arg \mathop {\max }\limits_{{\rm{subject\ to}}\;{\bf{C}}} \left\{ {1-P(O(r_{\min}))} \right\}\nonumber\\
& = \arg \mathop {\min }\limits_{{\rm{subject\ to}}\;{\bf{C}}}  {P(O(r_{\min}))},
\end{align}
where
$
{r_{\min }} = \arg \mathop {\min }\limits_{{r_i} \in {D_M}} \left\{ {1 - P(O(r_i)) } \right\}.
$

Recalling \eqref{5}, we give the optimal power allocation of $\{{\mathbb P}_s, {\mathbb P}_d\}$ to minimize $P(O(r_{\min})) $ while satisfying the constraint $\mathbf C$.

Let $P\left( {{O(r_{\text{min}})}}, {\mathbb P}_s, {\mathbb P}_d \right)$ represent the corresponding $P(O(r_{\min}))$ with respect to $\{{\mathbb P}_s, {\mathbb P}_d\}$.
We provide the integrated power allocation strategy of $\{{\mathbb P}_s, {\mathbb P}_d\}$ in the following lemma.
\begin{lemma}\label{lemma1}
	The optimal power allocation $\{{\mathbb P}_s, {\mathbb P}_d\}$ to minimize $P(O(r_{\min})) $ is given by
	\begin{align}
	\left\{ {{{\mathbb P}_s},{{\mathbb P}_d}} \right\} = \arg \mathop {\min }\limits_{\left\{ {{{\mathbb P}_s},{{\mathbb P}_d}} \right\} \in \left\{ {\left\{ {{{\mathbb P}'_s},{{\mathbb P}'_d}} \right\},\left\{ {{{\mathbb P}''_s},{{\mathbb P}''_d}} \right\}} \right\}} \left\{ {P\left( {{O(r_{\text{min}})}}, {\mathbb P}_s, {\mathbb P}_d \right)} \right\},
	\end{align}
	where 
$
	{{\mathbb P}'_s} = \sqrt {{{\left( {\frac{{\sigma _{d,{r_{\text{min}}}}^2}}{{2{\mathbf B}\sigma _{s,{r_{\text{min}}}}^2}} - \frac{1}{{2{\mathbf A}}}} \right)}^2}   +   \frac{{g\sigma _{d,{r_{\text{min}}}}^2}}{{{\mathbf A}{\mathbf B}\sigma _{s,{r_{\text{min}}}}^2}}}  - \frac{{\sigma _{d,{r_{\text{min}}}}^2}}{{2{\mathbf B}\sigma _{s,{r_{\text{min}}}}^2}} - \frac{1}{{2{\mathbf A}}},
$
$
	{\mathbb P}'_d={\mathbb P}'_s\frac{\sigma_{s,r_{\text{min}}}^{2}}{\sigma_{d,r_{\text{min}}}^{2}},
$
$
	{{\mathbb P}''_s} = \frac{{g - 1}}{{\sqrt {\frac{{{\Delta _d}\sigma _{s,{r_{\text{min}}}}^2}}{{{\Delta _s}\sigma _{d,{r_{\text{min}}}}^2}}\mathbf A \mathbf B g}  + \mathbf A}},
$
$
	{{\mathbb P}''_d} = \frac{{g - 1}}{{\sqrt {\frac{{{\Delta _s}\sigma _{d,{r_{\text{min}}}}^2}}{{{\Delta _d}\sigma _{s,{r_{\text{min}}}}^2}}\mathbf A \mathbf B g}  + \mathbf B}}.
$
\end{lemma}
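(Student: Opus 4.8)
The plan is to reduce the two–dimensional constrained problem to a one–dimensional search on the boundary of $\mathbf C$, and then to exploit the two–branch structure of $P(O(r_i))$ in \eqref{5}. Writing the constraint coefficients as $\mathbf A=\Delta_u\sigma_{s,v}^2/(\mathbb P_u\sigma_{u,v}^2)$ and $\mathbf B=\Delta_u\sigma_{d,v}^2/(\mathbb P_u\sigma_{u,v}^2)$, constraint $\mathbf C$ reads $(1+\mathbf A\mathbb P_s)(1+\mathbf B\mathbb P_d)\le g$. First I would verify that $P(O(r_{\min}))$ is strictly decreasing in each of $\mathbb P_s$ and $\mathbb P_d$: raising either secondary power can only increase $\bar\gamma_{s,r_i}$ and $\bar\gamma_{d,r_i}$ and hence shrink the outage event \eqref{2}, which can be checked from the sign of the partial derivatives of \eqref{5}. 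Consequently the minimizer must saturate the constraint, so the search is confined to the curve $(1+\mathbf A\mathbb P_s)(1+\mathbf B\mathbb P_d)=g$, a one–parameter family.

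Along this curve $P(O(r_{\min}))$ is a continuous function of one variable, given generically by the second branch of \eqref{5} and, at the single point where the curve meets the degenerate locus $\mathbb P_d/\mathbb P_s=\sigma_{s,r_{\min}}^2/\sigma_{d,r_{\min}}^2$, by the first branch. The two stated points correspond to these two possibilities, and the $\arg\min$ in the lemma simply returns whichever yields the smaller outage. I would produce each candidate separately. For $\{\mathbb P'_s,\mathbb P'_d\}$ I intersect the degenerate line $\mathbb P_d=\mathbb P_s\,\sigma_{s,r_{\min}}^2/\sigma_{d,r_{\min}}^2$ with the saturated constraint: substituting into $(1+\mathbf A\mathbb P_s)(1+\mathbf B\mathbb P_d)=g$ gives a quadratic in $\mathbb P_s$, whose unique positive root is exactly $\mathbb P'_s$, with $\mathbb P'_d$ following from the line relation. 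On this line the first branch of \eqref{5} is monotone in $\gamma_d\sigma_{d,r_{\min}}^2$, so the boundary intersection is indeed the best point of the family lying on the line.

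For $\{\mathbb P''_s,\mathbb P''_d\}$ I treat the generic branch and set up the stationarity condition for minimizing it subject to the saturated constraint, for instance by a Lagrange multiplier. After simplification the condition balances the two per–stream contributions $\Delta_s/(\gamma_s\sigma_{s,r_{\min}}^2)$ and $\Delta_d/(\gamma_d\sigma_{d,r_{\min}}^2)$ against the constraint, that is, it reduces to minimizing the weighted sum $\Delta_s/(\gamma_s\sigma_{s,r_{\min}}^2)+\Delta_d/(\gamma_d\sigma_{d,r_{\min}}^2)$ on the curve; solving the resulting pair of equations (an AM--GM type balance that introduces the factor $\sqrt{\Delta_d\sigma_{s,r_{\min}}^2/(\Delta_s\sigma_{d,r_{\min}}^2)}$) produces $\{\mathbb P''_s,\mathbb P''_d\}$, and one verifies by direct substitution that these satisfy $(1+\mathbf A\mathbb P''_s)(1+\mathbf B\mathbb P''_d)=g$. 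Comparing the objective values at the two candidates and keeping the minimizer then gives the claimed expression.

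The main obstacle is the generic–branch minimization: the second line of \eqref{5} is a sum of two terms, each an exponential divided by an affine function of the powers, so its exact stationary point on the curve does not admit an elementary closed form. The crux is therefore to identify the correct reduction — isolating the dominant (leading–order in SNR) dependence of \eqref{5} on $(\mathbb P_s,\mathbb P_d)$ so that the stationarity collapses to the clean weighted–sum balance above — and to justify that this reduction preserves the minimizer, after which the algebra is routine. A secondary point to handle with care is the removable singularity in the coefficient $C$ of \eqref{5} as the curve crosses the degenerate line, which is precisely what makes the two–candidate comparison, rather than a single smooth optimization, the right bookkeeping.
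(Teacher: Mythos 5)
Your proposal matches the paper's own proof essentially step for step: constraint saturation, the Case-1 candidate $\{{\mathbb P}'_s,{\mathbb P}'_d\}$ obtained by intersecting the degenerate line ${\mathbb P}_d/{\mathbb P}_s=\sigma_{s,r_{\min}}^2/\sigma_{d,r_{\min}}^2$ with the saturated constraint, the Case-2 candidate $\{{\mathbb P}''_s,{\mathbb P}''_d\}$ obtained by collapsing the generic branch to the weighted sum $\frac{\Delta_s}{{\mathbb P}_s\sigma_{s,r_{\min}}^2}+\frac{\Delta_d}{{\mathbb P}_d\sigma_{d,r_{\min}}^2}$ and applying a Lagrange multiplier on the boundary, and the final two-candidate comparison. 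The ``crux'' you flag is handled in the paper exactly as you anticipate --- a first-order Taylor expansion of the exponentials in the high-SNR limit $N_0\to 0$ (the paper's \eqref{app}), with the resulting closed form understood as optimal for that asymptotic objective rather than rigorously shown to preserve the exact minimizer --- so your plan is faithful to the paper's argument, including its acknowledged approximation.
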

with  ${\mathbf A}=\frac{{{\Delta _u}\sigma _{s,v}^2}}{{{{\mathbb P}_u}\sigma _{u,v}^2}}$ and ${\mathbf B}=\frac{{{\Delta _u}\sigma _{d,v}^2}}{{{{\mathbb P}_u}\sigma _{u,v}^2}}$
\begin{proof}
	See Appendix \ref{pa_pspd}.
\end{proof}

Then, let us look at the allocation schemes at the relay, i.e., the power allocation for the relay node and relay selection. We give the power of the relay, i.e., ${\mathbb P}_{r_{  i}}$ in the following lemma. 
\begin{lemma}
	The optimal transmit power of the chosen best relay $P_{r_{  i}}$ is given by
	\begin{align}\label{14}
	{{\mathbb P}_{{r_{  i}}}} = \frac{{{{\mathbb P}_u}\sigma _{u,v}^2}}{{{\Delta_u}\sigma _{{r_{  i}},v}^2}}(g-1).
	\end{align}
\end{lemma}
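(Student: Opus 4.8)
The plan is to recognize that $\mathbb{P}_{r_i}$ plays in the second transmission phase the same role that $\mathbb{P}_s,\mathbb{P}_d$ play in the first: while the chosen relay forwards its decoded data, it is the only secondary source of interference seen by the primary receiver $v$. I would therefore first write down the second-phase counterpart of the primary QoS constraint \eqref{1}. Since only $r_i$ transmits in this phase, the received signal at $v$ contains a single secondary term $\sqrt{\mathbb{P}_{r_i}}h_{r_i,v}x_{r_i}$, and the primary outage requirement becomes
\[ P_{uv}^{(2)} = P\left(\frac{\mathbb{P}_u |h_{u,v}|^2}{\mathbb{P}_{r_i}|h_{r_i,v}|^2 + N_0} < \Delta_u\right) \le P_{th}, \]
which is structurally identical to \eqref{1} except that the two interfering terms are replaced by one.

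Second, I would evaluate $P_{uv}^{(2)}$ in closed form using precisely the integration technique that produced \eqref{15}: conditioning on $|h_{r_i,v}|^2$ and integrating out the two independent exponential channel gains gives the single-interferer analogue
\[ P_{uv}^{(2)} = 1 - \frac{\gamma_u \sigma_{u,v}^2 \exp\left(-\Delta_u/(\gamma_u \sigma_{u,v}^2)\right)}{\gamma_u \sigma_{u,v}^2 + \Delta_u \gamma_{r_i} \sigma_{r_i,v}^2}, \]
with $\gamma_{r_i} = \mathbb{P}_{r_i}/N_0$. Imposing $P_{uv}^{(2)} \le P_{th}$, dividing through by $\gamma_u\sigma_{u,v}^2$, and invoking the definition of $g$ in the active-constraint regime reduces the requirement to $1 + \Delta_u \gamma_{r_i}\sigma_{r_i,v}^2/(\gamma_u\sigma_{u,v}^2) \le g$, i.e.\ to the upper bound $\mathbb{P}_{r_i} \le \frac{\mathbb{P}_u\sigma_{u,v}^2}{\Delta_u\sigma_{r_i,v}^2}(g-1)$, which is exactly \eqref{14}.

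Third, I would show this bound is attained at optimality. Inspecting $SINR_s$ and $SINR_d$ in \eqref{16}--\eqref{17} together with $P(O({\rm ST}|r_i)) = \mathbb{A} + \mathbb{B} - \mathbb{A}\mathbb{B} = 1-(1-\mathbb{A})(1-\mathbb{B})$ from Proposition~2, one sees that $\mathbb{P}_{r_i}$ enters only through the positive denominator terms $\beta_i\mathbb{P}_{r_i}\sigma_{r_i,s}^2$ and $\alpha_i\mathbb{P}_{r_i}\sigma_{r_i,d}^2$, so both $\mathbb{A}$ and $\mathbb{B}$ are strictly decreasing in $\mathbb{P}_{r_i}$ and hence so is $P(O({\rm ST}|r_i))$. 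Because larger relay power strictly improves the secondary outage while the primary system remains shielded by the constraint, the minimizer lies on the boundary, yielding \eqref{14}.

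The main obstacle I anticipate is not the algebra, which mirrors the first-phase derivation, but correctly identifying the governing constraint: one must notice that $\mathbb{P}_{r_i}$ is limited by the primary QoS evaluated in the \emph{second} phase (with $r_i$ as sole interferer) rather than by $\mathbf{C}$, and then verify that the secondary objective exhibits no internal trade-off in $\mathbb{P}_{r_i}$, so that the extremal (maximum-power) solution is indeed optimal.
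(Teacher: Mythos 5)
Your proposal matches the paper's own proof essentially step for step: the paper likewise writes the second-phase received signal at $v$ with $r_i$ as the sole secondary interferer, evaluates $P_{uv} = 1 - \exp\left(-\frac{\Delta_u N_0}{\mathbb{P}_u\sigma_{u,v}^2}\right)\big/\left(\frac{\Delta_u\mathbb{P}_{r_i}\sigma_{r_i,v}^2}{\mathbb{P}_u\sigma_{u,v}^2}+1\right) \le P_{th}$ in closed form, rearranges it into the upper bound $\mathbb{P}_{r_i}\le\frac{\mathbb{P}_u\sigma_{u,v}^2}{\Delta_u\sigma_{r_i,v}^2}(g-1)$, and takes equality as optimal. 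Your third step, explicitly verifying that $P(O({\rm ST}|r_i))$ is strictly decreasing in $\mathbb{P}_{r_i}$, is a slightly more careful justification of the boundary optimality that the paper merely asserts, but it is a refinement within the same argument rather than a different route.
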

\begin{proof}
	See Appendix \ref{pa_pri}.
\end{proof}

During the second phase, the selected relay $r_{  i}$ forwards the combined data streams to the STs with power ratios $\alpha_{  i}$ and $\beta_{  i}$. Here, we address the relay selection and provide the power allocation for optimal $\alpha_i$ and $\beta_i$ to minimize overall secondary system outage probability. Note that when $r_i$ is chosen for relaying, the secondary outage probability is $	P(O({\rm ST}|r_{  i}))$, which is hereby to be minimized.

\begin{lemma}\label{lemma3}
	The optimal power ratios of $\alpha_{  i}$ and $\beta_{  i}$ are given by
	\begin{equation}
	{\alpha _{  i}} = \left\{ {\begin{array}{*{20}{c}}
		{\frac{{bd + d - b}}{{2bd}}}&{ab = cd}\\
		{\frac{{ab + a + c - 1}}{{ab - cd}} - \frac{{\sqrt {\left( {ab - d + ad + abd} \right)\left( {bc - b + cd + bcd} \right)} }}{{\sqrt {bd} \left( {ab - cd} \right)}}}&{ab \ne cd}
		\end{array}} \right.
	\end{equation}
	and
	\begin{equation}
	\beta_{  i}=1-\alpha_{  i}
	\end{equation}
	where $a = 1 + \frac{{{{\mathbb P}_d}\sigma _{d,s}^2}}{{{{\mathbb P}_u}{\Delta _d}\sigma _{u,s}^2}}, b = \frac{{{{\mathbb P}_{{r_{  i}}}}\sigma _{{r_{  i}},s}^2}}{{{{\mathbb P}_u}{\Delta _d}\sigma _{u,s}^2}}, c = 1 + \frac{{{{\mathbb P}_s}\sigma _{s,d}^2}}{{{{\mathbb P}_u}{\Delta _s}\sigma _{u,d}^2}}, d = \frac{{{{\mathbb P}_{{r_{  i}}}}\sigma _{{r_{  i}},d}^2}}{{{{\mathbb P}_u}{\Delta _s}\sigma _{u,d}^2}}$.
\end{lemma}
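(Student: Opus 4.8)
The plan is to treat $P(O(\mathrm{ST}|r_{i}))$ in \eqref{pop} as a function of the single free variable $\alpha_i$ by enforcing the power-split constraint $\alpha_{i}+\beta_{i}=1$, and then to locate its minimizer over the feasible interval $\alpha_i\in[0,1]$ via a first-order condition. First I would substitute $\beta_i=1-\alpha_i$ together with the shorthands $a,b,c,d$ into the expressions for $\mathbb{A}$ and $\mathbb{B}$ from Proposition~2, obtaining
\begin{equation}
\mathbb{A}=\frac{1}{a\bigl(1+(1-\alpha_i)b\bigr)},\qquad \mathbb{B}=\frac{1}{c\bigl(1+\alpha_i d\bigr)}.
\end{equation}
The crucial reformulation is to write the objective as $\mathbb{A}+\mathbb{B}-\mathbb{A}\mathbb{B}=1-(1-\mathbb{A})(1-\mathbb{B})$, so that minimizing the outage probability is equivalent to maximizing the product $h(\alpha_i)=(1-\mathbb{A})(1-\mathbb{B})$.

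Writing $U=a(1+(1-\alpha_i)b)$ and $W=c(1+\alpha_i d)$, so that $1-\mathbb{A}=(U-1)/U$ and $1-\mathbb{B}=(W-1)/W$, I would compute the logarithmic derivative, using $U'=-ab$ and $W'=cd$, as
\begin{equation}
\frac{h'(\alpha_i)}{h(\alpha_i)}=\frac{cd}{W(W-1)}-\frac{ab}{U(U-1)}.
\end{equation}
Setting this to zero yields the compact first-order condition $ab\,W(W-1)=cd\,U(U-1)$. A clean second-order argument comes for free: since $W$ is increasing and $U$ is decreasing in $\alpha_i$ (and $U,W>1$ on the feasible set), the bracketed expression is strictly decreasing, so $h'$ changes sign exactly once, from positive to negative. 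This guarantees that the interior critical point is the unique maximizer of $h$, equivalently the unique minimizer of outage, so no separate convexity check or endpoint comparison is needed.

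The remaining work is purely algebraic. Expanding the first-order condition produces a quadratic $P\alpha_i^{2}+Q\alpha_i+R=0$ whose leading coefficient $P$ is proportional to $(cd-ab)$. When $ab=cd$ the quadratic degenerates to a linear equation, and solving it directly gives the stated $\alpha_i=(bd+d-b)/(2bd)$. When $ab\neq cd$, I would invoke the quadratic formula; the non-radical term $-Q/(2P)$ reduces to $(ab+a+c-1)/(ab-cd)$, and I expect the \emph{main obstacle} to be the bookkeeping needed to show that the discriminant factors exactly as the product $(ab-d+ad+abd)(bc-b+cd+bcd)$ written in the statement. Finally I would select the root carrying the minus sign before the radical, justified by verifying that it is the one lying in $[0,1]$ (the only admissible power ratio), and set $\beta_i=1-\alpha_i$ to complete the proof.
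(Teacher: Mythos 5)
Your proposal is correct and lands on exactly the same quadratic as the paper, but it gets there by a cleaner and slightly stronger route. The paper's own proof is a bare Lagrangian argument: it forms $\mathbb{L}=\mathbb{A}+\mathbb{B}-\mathbb{A}\mathbb{B}+\lambda(\alpha_i+\beta_i-1)$, imposes $\partial\mathbb{L}/\partial\alpha_i=0$, $\partial\mathbb{L}/\partial\beta_i=0$, $\alpha_i+\beta_i=1$, and simply asserts that solving this system yields the stated expressions; it never verifies that the stationary point is a minimizer. Your route---eliminating $\beta_i$ up front, rewriting the objective as $1-(1-\mathbb{A})(1-\mathbb{B})$, and taking the logarithmic derivative of the product---is the constraint-elimination counterpart of the same first-order condition (eliminating $\lambda$ from the paper's system gives precisely your equation $ab\,W(W-1)=cd\,U(U-1)$), so the algebraic endgame is identical: expanding yields $abd(cd-ab)\alpha_i^2+2abd(a+ab+c-1)\alpha_i+ab(c-1)-d(a+ab)(a+ab-1)=0$, which degenerates to a linear equation when $ab=cd$ (giving $(bd+d-b)/(2bd)$) and otherwise has the stated root; the discriminant does factor as claimed, with the radicand equal to $(ab-d+ad+abd)(bc-b+cd+bcd)/(bd)$, which I checked for consistency. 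What your version buys is the monotonicity argument: since $W$ is increasing, $U$ is decreasing, and both exceed $1$ on $[0,1]$, the log-derivative is strictly decreasing, so any interior critical point is the unique global minimizer---a genuine improvement over the paper, whose proof establishes only stationarity.

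One caution: your claim that no endpoint comparison is needed holds only when the first-order condition actually has a root in $(0,1)$. For extreme parameters (e.g., $d\to 0$ with $a,b,c$ fixed) the log-derivative is negative on all of $[0,1]$, the optimum sits at the endpoint $\alpha_i=0$, and the lemma's closed form returns a value outside $[0,1]$. This feasibility issue is inherited from the lemma and from the paper's proof, which ignore it entirely, so it is not a gap you introduced; it is exactly what your final ``verify the root lies in $[0,1]$'' step would catch.
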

\begin{proof}
	See Appendix \ref{al_be}.
\end{proof}

Therefore, these three lemmas constitute the power allocation scheme including all the transmission powers of the secondary nodes. Note that the derived optimal $\{{\mathbb P}_s, {\mathbb P}_d \}, {{\mathbb P}_{{r_{  i}}}}$ also apply in the case where there is no direct link between the STs.

We substitute the derived optimal $\alpha_i$ and $\beta_i$ back into $	P(O({\rm ST}|r_{  i}))$. {The relay selection scheme selects $r_i \in D_S$ such that the system outage probability $	P(O({\rm ST}|r_{  i}))$, given that $r_i$ is selected, is minimized, which can be written as}
\begin{align}
{r_{  i}} = \mathop {\rm {arg}} \min \limits_{{r_i} \in {D_S}} P(O({\rm ST}|r_{  i})) \label{6}.
\end{align}

It indicates that the proposed relay selection criterion considers the statistical instead of instantaneous CSI of the primary and secondary networks. The benefit of this criterion is prominent since the instantaneous CSI of the networks is typically difficult to obtain. However, the statistical CSI of the primary and secondary networks are much easier for the relays to obtain. Thus, in the particular settings of cognitive relay networks, it is highly desired that power allocation and relay selection request only statistical channel conditions. Note that both centralized and distributed relay selection algorithms can be developed using the proposed relay selection criterion. Specifically, for a centralized relay selection algorithm, an additional node is needed to maintain a table, which consists of $M$ relays and the corresponding statistical CSI. The selection and the related management are completed within this node. For a distributed relay selection algorithm, each relay $r_i$ maintains a timer which is assigned an initial value inversely proportional to min. Therefore, the best relay exhausts its timer the earliest compared with the other relays, and then broadcasts a control packet to notify other relays \cite{zou2013optimal}.

\section{Asymptotic Behavior Analysis}
In a cognitive radio setting, the primary users are licensed to access the channel with QoS guarantee, and the power of primary transmitter is rather high, comparing to secondary transmit power and interference.
In order to have a better understanding of the impact of primary interference on secondary network performance, we analyze the asymptotic behaviors of the derived power allocation and the secondary outage probability when the primary SNR $\gamma_u$ approaches infinity.

First, let us look at the asymptotic behavior of the power allocation. To make it compact and consistent, the power allocation scheme is expressed with respect to SNRs $\{\gamma_s, \gamma_d, \gamma_{r_i} \}$ as well.

When $\gamma_u \rightarrow \infty$, we have
$
g = \max \left\{ {\frac{1}{{1 - {P_{th}}}},1} \right\} = \frac{1}{{1 - {P_{th}}}}\triangleq g'.
$
Let $P\left( {{O(r_{\text{min}})}}, {\gamma}_s, {\gamma}_d \right)$ represent the corresponding asymptotic $P(O(r_{\min}))$ with respect to the  $\{{\gamma}_s, {\gamma}_d\}$.
We provide the integrated asymptotic power allocation strategy of $\{{\gamma}_s, {\gamma}_d\}$ in the following corollary.
\begin{corollary}\label{corollary1}
	The optimal power allocation $\{\gamma_s, \gamma_d\}$ when $\gamma_u \rightarrow \infty$ is given by
	\begin{align}
	\left\{ {{{\gamma}_s},{{\gamma}_d}} \right\} &= \arg \mathop {\min }\limits_{\left\{ {{{\gamma}_s},{{\gamma}_d}} \right\} \in \left\{ {\left\{ {{{\gamma}'_s},{{\gamma}'_d}} \right\},\left\{ {{{\gamma}''_s},{{\gamma}''_d}} \right\}} \right\}} \left\{ {P\left( {{O(r_{\text{min}})}}, {\gamma}_s, {\gamma}_d \right)} \right\}\nonumber\\
	&\triangleq \{ \rho_s\gamma_u,\rho_d\gamma_u\},
	\end{align}
	with $\{\rho_s,\rho_d \}\in \{\{ \rho'_s,\rho'_d\}, \{\rho''_s,\rho''_d \}\}$,
	where
$
{{\gamma '}_s}=\frac{{{\gamma _u}\sigma _{u,v}^2g'}}{{2{\Delta _u}\left( {\sigma _{s,v}^2 + \sigma _{d,v}^2\frac{{\sigma _{s,{r_{{\rm{min}}}}}^2}}{{\sigma _{d,{r_{{\rm{min}}}}}^2}}} \right)}} \triangleq {{\rho '}_s}{\gamma _u},
$
$
	{{\gamma '}_d} = {{\gamma '}_s}\frac{{\sigma _{s,{r_{{\rm{min}}}}}^2}}{{\sigma _{d,{r_{{\rm{min}}}}}^2}} \triangleq {{\rho '}_d}{\gamma _u},
$
$
	{{\gamma ''}_s} = \frac{{\left( {g' - 1} \right)\sigma _{u,v}^2{\gamma _u}}}{{{\Delta _u}\sigma _{s,v}^2\sqrt {g'\frac{{{\Delta _d}\sigma _{s,{r_{{\rm{min}}}}}^2\sigma _{d,v}^2}}{{{\Delta _s}\sigma _{d,{r_{{\rm{min}}}}}^2\sigma _{s,v}^2}} + 1} }} \triangleq {{\rho ''}_s}{\gamma _u},
$
and
$
	{{\gamma ''}_d} = \frac{{\left( {g' - 1} \right)\sigma _{u,v}^2{\gamma _u}}}{{{\Delta _u}\sigma _{d,v}^2\sqrt {g'\frac{{{\Delta _s}\sigma _{d,{r_{{\rm{min}}}}}^2\sigma _{s,v}^2}}{{{\Delta _d}\sigma _{s,{r_{{\rm{min}}}}}^2\sigma _{d,v}^2}} + 1} }} \triangleq {{\rho ''}_d}{\gamma _u}.
$
	
\end{corollary}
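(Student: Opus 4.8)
The plan is to read Corollary~\ref{corollary1} as the $\gamma_u\to\infty$ limit of Lemma~\ref{lemma1}, re-expressed through the transmit SNRs. First I would record the asymptotics of the three $\gamma_u$-dependent ingredients of Lemma~\ref{lemma1}. Since the exponential in $g=\max\{\exp(-\Delta_u/(\gamma_u\sigma_{u,v}^2))/(1-P_{th}),1\}$ tends to $1$, we have $g\to g'=1/(1-P_{th})$, as already noted in the text. Writing $\mathbb{P}_u=\gamma_u N_0$ in $\mathbf{A}=\Delta_u\sigma_{s,v}^2/(\mathbb{P}_u\sigma_{u,v}^2)$ and $\mathbf{B}=\Delta_u\sigma_{d,v}^2/(\mathbb{P}_u\sigma_{u,v}^2)$ shows $\mathbf{A},\mathbf{B}=\Theta(1/\gamma_u)$; more usefully, $\mathbf{A}\mathbb{P}_s=\tilde{\mathbf{A}}\,\gamma_s/\gamma_u$ and $\mathbf{B}\mathbb{P}_d=\tilde{\mathbf{B}}\,\gamma_d/\gamma_u$ with $\tilde{\mathbf{A}}=\Delta_u\sigma_{s,v}^2/\sigma_{u,v}^2$ and $\tilde{\mathbf{B}}=\Delta_u\sigma_{d,v}^2/\sigma_{u,v}^2$ both independent of $\gamma_u$.

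Second, I would introduce the scaling ansatz $\gamma_s=\rho_s\gamma_u$, $\gamma_d=\rho_d\gamma_u$ and verify it is the right order. Under it the primary constraint $\mathbf{C}$ collapses to $(1+\tilde{\mathbf{A}}\rho_s)(1+\tilde{\mathbf{B}}\rho_d)\le g'$, which no longer contains $\gamma_u$; and, because every SINR ratio entering $P(O(r_i))$ in \eqref{5} has numerator and primary interference both growing like $\gamma_u$, the objective $P(O(r_{\min}))$ converges pointwise to a limit that depends only on $\{\rho_s,\rho_d\}$ and the channel variances. Thus the rescaled problem ``minimize $P(O(r_{\min}))$ subject to $\mathbf{C}$'' is a fixed, $\gamma_u$-free program in $\{\rho_s,\rho_d\}$; its minimizer is a constant pair, which forces the optimal SNRs to grow linearly, i.e.\ $\{\gamma_s,\gamma_d\}=\{\rho_s\gamma_u,\rho_d\gamma_u\}$, exactly the form asserted.

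Third, I would solve this reduced program by the same case split used for Lemma~\ref{lemma1} in Appendix~\ref{pa_pspd}, now with $g'$, $\tilde{\mathbf{A}}$, $\tilde{\mathbf{B}}$ in place of $g$, $\mathbf{A}$, $\mathbf{B}$. The first candidate $\{\rho'_s,\rho'_d\}$ arises in the regime $\gamma_d/\gamma_s=\sigma_{s,r_{\min}}^2/\sigma_{d,r_{\min}}^2$ (equivalently $\rho_d=\rho_s\sigma_{s,r_{\min}}^2/\sigma_{d,r_{\min}}^2$) with the constraint active, and the second $\{\rho''_s,\rho''_d\}$ in the complementary regime; in each case one substitutes the corresponding Lemma~\ref{lemma1} expression, takes the coefficient of $\gamma_u$ after dividing by $N_0$, and simplifies. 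The outer $\arg\min$ over the two candidates then passes to the limit by the pointwise convergence established above.

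I expect the main obstacle to be the limiting algebra in the third step. Each Lemma~\ref{lemma1} expression hides the scaling: inside radicals of the form $\sqrt{(\,\cdots/2\mathbf{B}-\cdots/2\mathbf{A}\,)^2+g\,\cdots/(\mathbf{A}\mathbf{B})}$ every summand blows up like $\gamma_u^2$, so one must factor $\gamma_u^2$ out of the radical, cancel it against the $\gamma_u^{-2}$ carried by $1/(\mathbf{A}\mathbf{B})$ outside, and collapse the surviving variance combinations into the compact forms $\rho'_s,\rho''_s$; the same care is needed to check that the high-SNR constraint is met with equality and that the ratio condition defining the first candidate is preserved under the scaling. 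Keeping track of which lower-order terms may legitimately be dropped as $\gamma_u\to\infty$ is precisely where an error is easiest to make.
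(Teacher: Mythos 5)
Your high-level route coincides with the paper's own proof (Appendix~\ref{as_pspd}): freeze $g$ at $g'=1/(1-P_{th})$, revisit the two candidates of Lemma~\ref{lemma1} case by case, and read off the coefficient of $\gamma_u$. Your second step --- recasting the optimization as a $\gamma_u$-free program in $\{\rho_s,\rho_d\}$ to justify the linear-scaling ansatz --- is a sound addition that the paper skips entirely, but it is extra rigor on the same path, not a different method.

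The genuine gap is in your third step, exactly where you anticipated trouble, and it is not mere bookkeeping. Put $\hat u=\sigma_{u,v}^2\sigma_{d,r_{\min}}^2/(2\Delta_u\sigma_{d,v}^2\sigma_{s,r_{\min}}^2)$ and $\hat v=\sigma_{u,v}^2/(2\Delta_u\sigma_{s,v}^2)$; then in \eqref{ps} the two terms outside the radical are $\gamma_u N_0\hat u$ and $\gamma_u N_0\hat v$, the quantity under the radical is $(\gamma_u N_0)^2\bigl[(\hat u-\hat v)^2+4g\hat u\hat v\bigr]$, and Lemma~\ref{lemma1}'s Case-1 candidate is exactly $\gamma'_s=\gamma_u\bigl[\sqrt{(\hat u+\hat v)^2+4(g-1)\hat u\hat v}-(\hat u+\hat v)\bigr]$. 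Every term scales identically in $\gamma_u$, so ``taking the coefficient of $\gamma_u$'' leaves the radical intact: the exact limit is $\rho'_s=\sqrt{(\hat u+\hat v)^2+4(g'-1)\hat u\hat v}-(\hat u+\hat v)$, the positive root of the rescaled constraint $(1+\rho/(2\hat v))(1+\rho/(2\hat u))=g'$. This does \emph{not} collapse to the Corollary's $\rho'_s=g'\hat u\hat v/(\hat u+\hat v)$. The paper bridges the two by an explicit extra move, a ``first order Taylor expansion'' of the radical; but the expansion parameter $4(g'-1)\hat u\hat v/(\hat u+\hat v)^2$ tends to a nonzero constant as $\gamma_u\to\infty$, so that step is a small-$P_{th}$ (small $g'-1$) approximation, not a consequence of the limit --- and even then the expansion gives $2(g'-1)\hat u\hat v/(\hat u+\hat v)$ rather than $g'\hat u\hat v/(\hat u+\hat v)$. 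Hence an exact execution of your plan yields a formula different from the one stated, and to land on the stated $\rho'_s$ you would have to import and separately justify that Taylor step, which your plan does not contain. Case 2, by contrast, needs no expansion at all: factoring $\mathbf A$ out of the denominator of ${\mathbb P}''_s$ gives $\mathbf A\bigl[\sqrt{g\,\Delta_d\sigma_{s,r_{\min}}^2\sigma_{d,v}^2/(\Delta_s\sigma_{d,r_{\min}}^2\sigma_{s,v}^2)}+1\bigr]$ via $\mathbf B/\mathbf A=\sigma_{d,v}^2/\sigma_{s,v}^2$, and replacing $g$ by $g'$ yields $\rho''_s$ immediately (with the $+1$ outside the radical; its placement inside the radical in the Corollary's display is evidently a typo). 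Finally, a small slip in your obstacle analysis: $1/(\mathbf A\mathbf B)$ sits \emph{inside} the radical and scales like $\gamma_u^{+2}$; there is no $\gamma_u^{-2}$ factor outside to cancel against.
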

\begin{proof}
See Appendix \ref{as_pspd}.
\end{proof}

\begin{corollary}\label{corollary2}
	The optimal power allocation for the relay $\gamma_{r_{i}}$ when $\gamma_u \rightarrow \infty$ is given by
	$
	{\gamma _{{r_i}}} = \frac{{{\gamma _u}\sigma _{u,v}^2}}{{{\Delta _u}\sigma _{{r_{  i}},v}^2}}(g' - 1) \triangleq {\rho _{{r_i}}}{\gamma _u}.
	$
\end{corollary}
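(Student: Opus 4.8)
The plan is to obtain this corollary as a direct specialization of Lemma~2 under the limit $\gamma_u \to \infty$, since Lemma~2 already furnishes the optimal relay power ${\mathbb P}_{r_i}$ for every finite primary SNR. The key observation is that the optimizer in \eqref{14} depends on the primary transmission only through the QoS factor $g$, so the asymptotic optimizer is recovered merely by replacing $g$ with its limiting value $g'$ and re-expressing the result in terms of SNRs.

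First I would recast \eqref{14} in SNR form. Dividing both sides of \eqref{14} by $N_0$ and using the conventions $\gamma_{r_i} = {\mathbb P}_{r_i}/N_0$ and $\gamma_u = {\mathbb P}_u/N_0$ yields
\[
\gamma_{r_i} = \frac{\gamma_u \sigma_{u,v}^2}{\Delta_u \sigma_{r_i,v}^2}\,(g-1),
\]
which holds for all finite $\gamma_u$. Next I would invoke the limit of $g$ already established in this section: since $g = \max\{\exp(-\Delta_u/(\gamma_u\sigma_{u,v}^2))/(1-P_{th}),\,1\}$ and the exponential tends to $1$ as $\gamma_u\to\infty$, we have $g \to g' = 1/(1-P_{th})$. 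Substituting $g'$ for $g$ in the displayed expression gives
\[
\gamma_{r_i} = \frac{\gamma_u\sigma_{u,v}^2}{\Delta_u\sigma_{r_i,v}^2}\,(g'-1) \triangleq \rho_{r_i}\gamma_u,
\]
with $\rho_{r_i} = \sigma_{u,v}^2(g'-1)/(\Delta_u\sigma_{r_i,v}^2)$, which is exactly the claimed form.

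There is essentially no analytical obstacle here, as the statement is a straightforward limiting specialization of Lemma~2. The only point that warrants (minimal) care is confirming that the first branch of the max survives in the limit, i.e.\ that $g' > 1$, equivalently $P_{th}\in(0,1)$; this keeps $\rho_{r_i}$ strictly positive and is guaranteed by the QoS setup. Hence the hard part, if any, is purely bookkeeping: ensuring the SNR normalization is applied consistently so that the $\gamma_u$ dependence factors out cleanly into the constant $\rho_{r_i}$.
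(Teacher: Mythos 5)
Your proposal is correct and matches the paper's own argument: the paper proves this corollary by simply citing \eqref{14} (Lemma~2) together with the limit $g \to g' = 1/(1-P_{th})$ established at the start of Section~V, exactly the specialization you carry out. Your additional remark that $g' > 1$ (so $\rho_{r_i} > 0$) is a harmless bit of extra care not spelled out in the paper.
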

\begin{proof}
	It is straightforward and follows from \eqref{14}.
\end{proof}

\begin{corollary}
	The optimal power ratios of $\alpha_{ i}$ and $\beta_{ i}$ when $\gamma_u \rightarrow \infty$ are given by
	\begin{equation}
	{\alpha _{ i}} = \left\{ {\begin{array}{*{20}{c}}
		{\frac{{bd + d - b}}{{2bd}}}&{ab = cd}\\
		{\frac{{ab + a + c - 1}}{{ab - cd}} - \frac{{\sqrt {\left( {ab - d + ad + abd} \right)\left( {bc - b + cd + bcd} \right)} }}{{\sqrt {bd} \left( {ab - cd} \right)}}}&{ab \ne cd}
		\end{array}} \right.
	\end{equation}
	and
	$
	\beta_{ i}=1-\alpha_{ i}
	$,
	where $a = 1 + \frac{{{{\rho}_d}\sigma _{d,s}^2}}{{{\Delta _d}\sigma _{u,s}^2}}, b = \frac{{{{\rho}_{{r_{ i}}}}\sigma _{{r_{ i}},s}^2}}{{{\Delta _d}\sigma _{u,s}^2}}, c = 1 + \frac{{{{\rho}_s}\sigma _{s,d}^2}}{{{\Delta _s}\sigma _{u,d}^2}}, d = \frac{{{{\rho}_{{r_{ i}}}}\sigma _{{r_{  i}},d}^2}}{{{\Delta _s}\sigma _{u,d}^2}}$.
\end{corollary}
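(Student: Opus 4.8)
The plan is to recognize this corollary as the $\gamma_u \to \infty$ specialization of Lemma~\ref{lemma3} and to reduce its proof to a substitution of the asymptotic power allocation from Corollaries~\ref{corollary1} and~\ref{corollary2} into the minimizer already established in Lemma~\ref{lemma3}. The essential observation is that the objective $P(O({\rm ST}|r_i)) = \mathbb{A}+\mathbb{B}-\mathbb{A}\mathbb{B}$ depends on the power ratios only through the composite parameters $a,b,c,d$, since $\mathbb{A}=\frac{1}{a(1+\beta_i b)}$ and $\mathbb{B}=\frac{1}{c(1+\alpha_i d)}$ with $\beta_i=1-\alpha_i$. Consequently, the closed-form minimizer derived in Lemma~\ref{lemma3} as a function of $a,b,c,d$ remains valid verbatim; only the numerical values of $a,b,c,d$ are altered by the asymptotic regime.

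First I would translate the asymptotic power allocation into the relations $\mathbb{P}_s=\rho_s\mathbb{P}_u$, $\mathbb{P}_d=\rho_d\mathbb{P}_u$, and $\mathbb{P}_{r_i}=\rho_{r_i}\mathbb{P}_u$, which follow from Corollaries~\ref{corollary1} and~\ref{corollary2} together with the definition $\gamma_k=\mathbb{P}_k/N_0$ and $\gamma_u=\mathbb{P}_u/N_0$. Substituting these into the definitions of $a,b,c,d$ from Lemma~\ref{lemma3}, the common factor $\mathbb{P}_u$ cancels between numerator and denominator; for instance $a = 1 + \frac{\rho_d \mathbb{P}_u \sigma_{d,s}^2}{\mathbb{P}_u \Delta_d \sigma_{u,s}^2} = 1 + \frac{\rho_d \sigma_{d,s}^2}{\Delta_d \sigma_{u,s}^2}$, and analogously $b=\frac{\rho_{r_i}\sigma_{r_i,s}^2}{\Delta_d\sigma_{u,s}^2}$, $c=1+\frac{\rho_s\sigma_{s,d}^2}{\Delta_s\sigma_{u,d}^2}$, and $d=\frac{\rho_{r_i}\sigma_{r_i,d}^2}{\Delta_s\sigma_{u,d}^2}$. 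This recovers exactly the asymptotic definitions of $a,b,c,d$ stated in the corollary.

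Finally I would invoke that the minimizer of $P(O({\rm ST}|r_i))$ over $\alpha_i\in[0,1]$ with $\beta_i=1-\alpha_i$ is a purely algebraic function of $a,b,c,d$, namely the two-branch formula of Lemma~\ref{lemma3} split according to whether $ab=cd$ or $ab\neq cd$. Since the asymptotic $a,b,c,d$ computed above are finite and well-defined, feeding them into this formula yields the stated $\alpha_i$ together with $\beta_i=1-\alpha_i$, completing the argument.

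The main difficulty here is not analytic but organizational: one must verify that the $\mathbb{P}_u$ cancellation is exact in all four parameters, and that the case distinction $ab=cd$ versus $ab\neq cd$ is inherited consistently by the asymptotic parameters so that the correct branch of the Lemma~\ref{lemma3} formula is selected in the limit. A secondary subtlety worth remarking is that taking $\gamma_u\to\infty$ commutes with the minimization; this holds because the parameters $a,b,c,d$ converge to finite limits and the dependence of $P(O({\rm ST}|r_i))$ on $\alpha_i$ is smooth on the compact interval $[0,1]$, so the convergence is uniform in $\alpha_i$ and the argmin of the limit coincides with the limit of the argmin.
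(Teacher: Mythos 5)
Your proposal is correct and follows exactly the paper's own route: the paper's proof is the one-line instruction ``Substitute Corollary~\ref{corollary1} and Corollary~\ref{corollary2} into Lemma~\ref{lemma3},'' which is precisely your reduction, with your explicit verification of the $\mathbb{P}_u$ (equivalently $\gamma_u$) cancellation in $a,b,c,d$ and the remark on interchanging the limit with the minimization being careful elaborations of details the paper leaves implicit.
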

\begin{proof}
	Substitute Corollary \ref{corollary1} and Corollary \ref{corollary2} into Lemma \ref{lemma3} and we get the desired result.
\end{proof}

Note that the above three corollaries constitute the power allocation scheme when $\gamma_u \rightarrow \infty$. Interestingly, the transmit powers of secondary transmitters are proportional to the primary transmit power, and the power ratios at the relays are only associated with channel coefficients.

 Next, we investigate the asymptotic secondary outage performance and start with $P(O(r_i))$. Follow the result in Proposition \ref{pro1}, we can have the asymptotic expression of $P(O(r_i))$ in the two cases.

{\bf Case 1}: $\frac{{{\gamma _d}}}{{{\gamma _s}}} = \frac{{\sigma _{s,{r_{i}}}^2}}{{\sigma _{d,{r_{i}}}^2}}$.

In this case, we can obtain the following result as $\gamma_u \rightarrow \infty$:
\begin{align}
P(O({r_i}))&=1 - \frac{T}{{{\gamma _u}\sigma _{u,{r_i}}^2}}\left( {1 + \frac{{{\Delta _s}{\Delta _d}T}}{{{\rho _d}{\gamma _u}\sigma _{d,{r_i}}^2}}} \right) \nonumber\\
&= \frac{{{\Delta ^2}\sigma _{u,{r_i}}^4 + \left( {{\Delta _s} + {\Delta _d}} \right)\sigma _{u,{r_i}}^2{\rho _d}\sigma _{d,{r_i}}^2}}{{{{\left( {\Delta \sigma _{u,{r_i}}^2 + {\rho _d}\sigma _{d,{r_i}}^2} \right)}^2}}}.
\end{align}

{\bf Case 2}: $\frac{{{\gamma _d}}}{{{\gamma _s}}} \ne \frac{{\sigma _{s,{r_{i}}}^2}}{{\sigma _{d,{r_{i}}}^2}}$.

In this case, $P(O({r_i}))$ can be expressed when $\gamma_u \rightarrow \infty$ as:
\begin{align}
P(O({r_i}))&=1 - \frac{C}{{A{\gamma _u}\sigma _{u,{r_i}}^2 + 1}} - \frac{{1 - C}}{{B{\gamma _u}\sigma _{u,{r_i}}^2 + 1}} \nonumber\\
&= \frac{1}{{\left( {{\rho _s}\sigma _{s,{r_i}}^2 - {\rho _d}\sigma _{d,{r_i}}^2} \right)}}\left[ {\frac{{A'{\rho _s}\sigma _{s,{r_i}}^2}}{{\left( {A' + 1} \right)}} + \frac{{B'{\rho _d}\sigma _{d,{r_i}}^2}}{{\left( {B' + 1} \right)}}} \right],
\end{align}
where
$
A' = \frac{{\left( {\Delta  - {\Delta _d}} \right)\sigma _{u,{r_i}}^2}}{{{\rho _s}\sigma _{s,{r_i}}^2}} + \frac{{{\Delta _d}\sigma _{u,{r_i}}^2}}{{{\rho _d}\sigma _{d,{r_i}}^2}}
$
, $
B' = \frac{{\left( {\Delta  - {\Delta _s}} \right)\sigma _{u,{r_i}}^2}}{{{\rho _d}\sigma _{d,{r_i}}^2}} + \frac{{{\Delta _s}\sigma _{u,{r_i}}^2}}{{{\rho _s}\sigma _{s,{r_i}}^2}}.
$

Thus, the asymptotic expression of $P(O({r_i}))$ is obtained. Using \eqref{10} and \eqref{9}, the probability $P\left( {D = \varnothing } \right)$ and $P\left( {D = {D_S}} \right)$ can be obtained respectively. By \eqref{11}, one can also get $P({\rm{out}}\left| {D  =  \varnothing }  \right.)$ when $\gamma_u \rightarrow \infty$ as

\begin{align}
&P({\rm{out}}\left| {D  =  \varnothing }  \right.) \nonumber\\
&= \frac{{2{\rho _d}\sigma _{d,s}^2{\Delta _s}\sigma _{u,d}^2 + 2{\rho _s}\sigma _{s,d}^2{\Delta _d}\sigma _{u,s}^2 + {\Delta _d}\sigma _{u,s}^2{\Delta _s}\sigma _{u,d}^2}}{{\left( {2{\rho _d}\sigma _{d,s}^2 + {\Delta _d}\sigma _{u,s}^2} \right)\left( {2{\rho _s}\sigma _{s,d}^2 + {\Delta _s}\sigma _{u,d}^2} \right)}}
\end{align}

Consequently, inspired by \eqref{12}, we give several asymptotic results as
$
\Omega  = \frac{{{\gamma _u}}}{{\sum\limits_{{r_i} \in {D_C}} {\frac{1}{{{\beta _i}{\rho _{{r_i}}}\sigma _{{r_i},s}^2}}}  + \frac{1}{{{\Delta _d}\sigma _{u,s}^2}}}} \triangleq \Omega '{\gamma _u},
$
$
\Xi  = \frac{{{\gamma _u}}}{{\sum\limits_{{r_i} \in {D_C}} {\frac{1}{{{\alpha _i}{\rho _{{r_i}}}\sigma _{{r_i},d}^2}}}  + \frac{1}{{{\Delta _s}\sigma _{u,d}^2}}}} \triangleq \Xi '{\gamma _u},
$
$
\Lambda  = {\rho _d}\sigma _{d,s}^2{\gamma _u} \triangleq \Lambda '{\gamma _u},\Psi  = {\rho _s}\sigma _{s,d}^2{\gamma _u} \triangleq \Psi '{\gamma _u},
$
and therefore, asymptotic $P\left( {{\rm{out}}\left| {D = {D_s}} \right.} \right)$ is provided as
\begin{align}
&P\left( {{\rm{out}}\left| {D = {D_s}} \right.} \right) \nonumber\\
&= 1 + \frac{{\sum\limits_{{D_C} \in {D_S}} {{{\left( { - 1} \right)}^E}} \left( {\Omega '\Xi ' + \Omega '\Psi ' + \Xi '\Lambda ' - \Lambda '\Psi '} \right)}}{{\left( {{\Delta _d}\sigma _{u,s}^2 + {\rho _d}\sigma _{d,s}^2} \right)\left( {{\Delta _s}\sigma _{u,d}^2 + {\rho _s}\sigma _{s,d}^2} \right)}}.
\end{align}
Since the secondary outage probability is written as
\begin{align}
{P_{{\rm{out}}}}    =    & P  \left(  {{\rm{out}}|D   =   \varnothing }  \right)  P  \left(  {D   =   \varnothing }  \right)    \nonumber \\
& +                \sum\limits_{{D_S} \in {D_M}}               {P  \left(  {{\rm{out}}|D   =   {D_S}}  \right)  P  \left( {D   =   {D_S}} \right)},
\end{align}
the asymptotic behavior of ${P_{{\rm{out}}}} $ when $\gamma_u\rightarrow \infty$ is therefore derived.

Note that the asymptotic secondary outage probability when $\gamma_u\rightarrow \infty$ is only associated with statistical channel coefficients whereas it is independent of $\gamma_u$. Hence, if we characterize the exact secondary outage probability in terms of $\gamma_u$, a horizontal performance floor is expected in the high $\gamma_u$ regime. The underlying reason is that when $\gamma_u$ is large, the secondary transmit SNRs can be expressed linearly of $\gamma_u$. Therefore, in the high $\gamma_u$ regime, secondary signal-to-interference-plus-noise ratios are parameters independent of $\gamma_u$.

\section{Simulation Results}
In this section, we provide simulation results to validate the analysis and to show the improvement brought by the proposed cooperative diversity scheme. Referring to the system model in Fig. \ref{f1}, the
simulation setup is: data rate $R_u=0.6$ bits/s/Hz, $R_d=0.3$ bits/s/Hz, $R_s=0.2$ bits/s/Hz, channel coefficients $\sigma_{u,v}^2=\sigma_{s,d}^2=\sigma_{s,r_i}^2=\sigma_{d,r_i}^2=5$ dB, $\sigma_{u,s}^2 =  \sigma_{u,d}^2 =\sigma_{s,v}^2 = \sigma_{d,v}^2 = \sigma_{u,r_i}^2 = \sigma_{r_i,v}^2 = -5$ dB. The case $M=0$ indicates non-cooperation scheme.
\begin{figure}[h]
\centering
\includegraphics[width=2.8in]{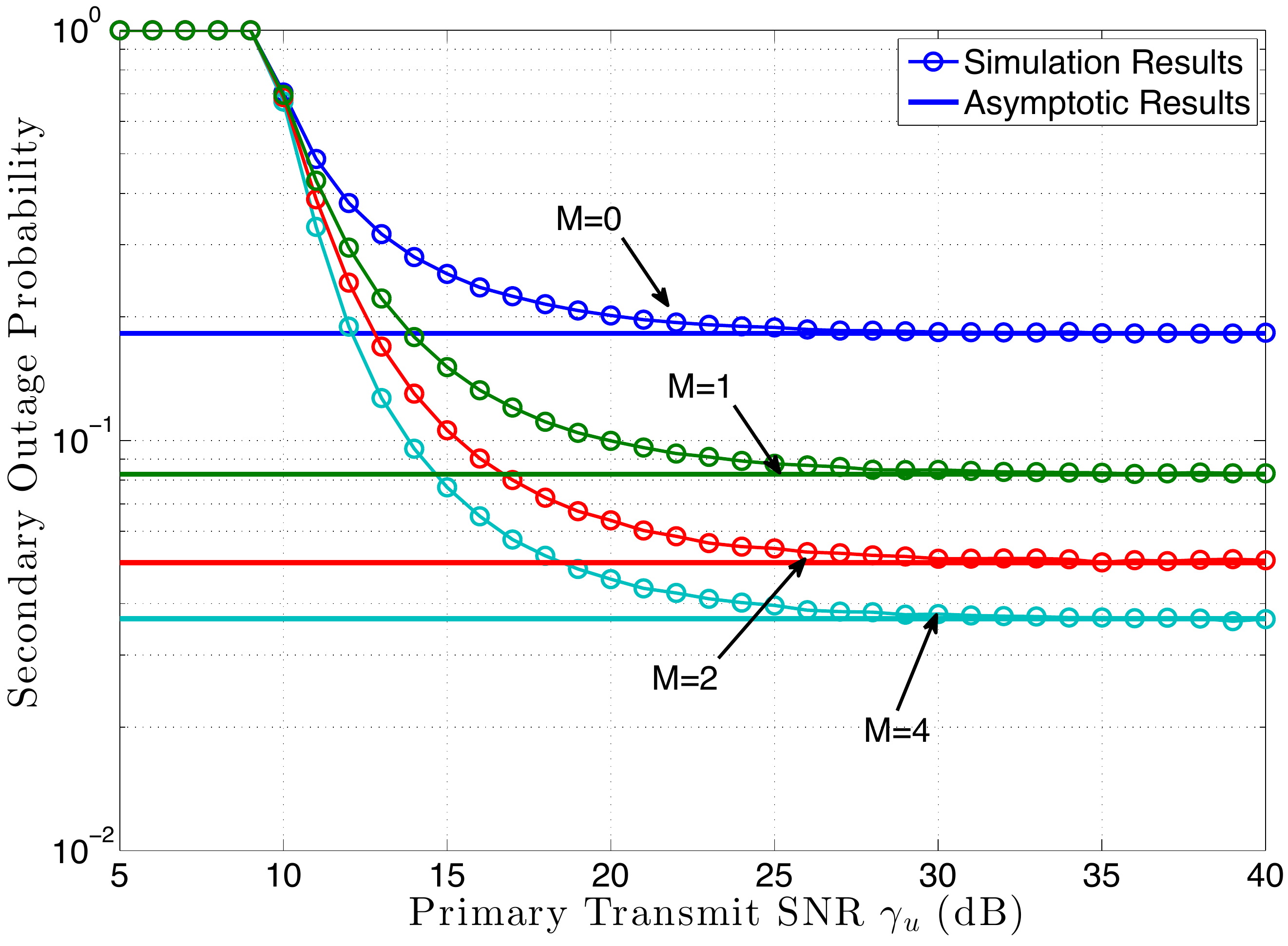}
\caption{\small  Secondary outage probability versus primary transmit SNR $\gamma_u$.}
\label{direct_versus_ga_u}
\end{figure}

First, we set $P_{th}=0.02$ and demonstrate in Fig. \ref{direct_versus_ga_u} the secondary outage probability versus primary SNR $\gamma_u$ of the proposed cooperation and non-cooperation schemes with uniform power allocation, i.e., ${\mathbb P}_s={\mathbb P}_d$ and $\alpha_i=\beta_i$. It is observed that the proposed scheme outperforms the non-cooperation scheme with lower outage probability, which is also improved as the number of relays increases. We notice that the two schemes share the same cutoff value, and secondary transmission is forbidden when $\gamma_u$ is smaller than the cutoff value because no extra interference is allowed in order to achieve the pre-defined primary QoS. A higher $\gamma_u$ results in greater secondary transmit power, and then lower secondary outage probability. As is expected, we can also see a performance floor occurs in high $\gamma_u$ regime, which is due to the fact that the interference from the primary transmitter dominates the secondary outage rather than noise in this case. This also validates the asymptotic outage probability analysis when $\gamma_u \rightarrow \infty$.

\begin{figure}[h]
\centering
\includegraphics[width=2.8in]{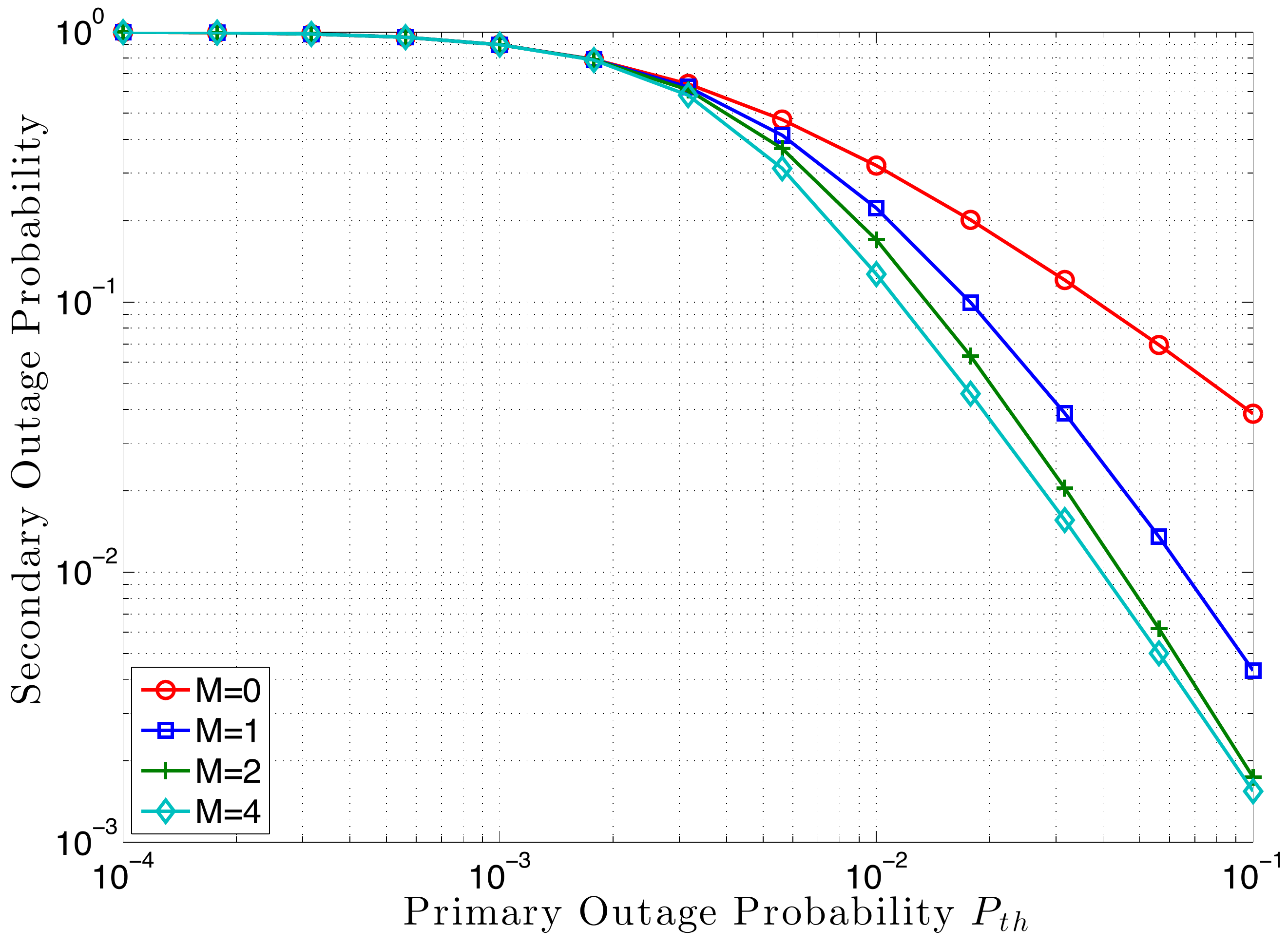}
\caption{\small  Secondary outage probability versus primary QoS constraint $P_{th}$.}
\label{direct_versus_P_th}
\end{figure}

In Fig. \ref{direct_versus_P_th}, we present the secondary outage probability for different values of $P_{th}$. When the QoS requirement of the primary system is too stringent, no secondary transmission is allowed. When the QoS requirement loosens, there begins the secondary transmission and the proposed adaptive cooperation diversity scheme achieves lower outage probability than the non-cooperation scheme. Higher $P_{th}$ allows greater secondary transmit power and then the secondary outage probability is consequently reduced.

Next, we show the simulation results regarding power allocation. The simulation setup is: data rate $R_u=0.6$ bits/s/Hz, $R_d=0.3$ bits/s/Hz, $R_s=0.2$ bits/s/Hz, channel coefficients $\sigma_{u,v}^2=\sigma_{s,r_i}^2=5$ dB, $\sigma_{d,r_i}^2=8$ dB, $\sigma_{s,d}^2=0$ dB, $\sigma_{u,s}^2 =  \sigma_{s,v}^2 = \sigma_{d,v}^2 = \sigma_{u,r_i}^2 = \sigma_{r_i,v}^2 = -5$ dB, $\sigma_{u,d}^2 =-8$ dB.

\begin{figure}[h]
\centering
\includegraphics[width=2.8in]{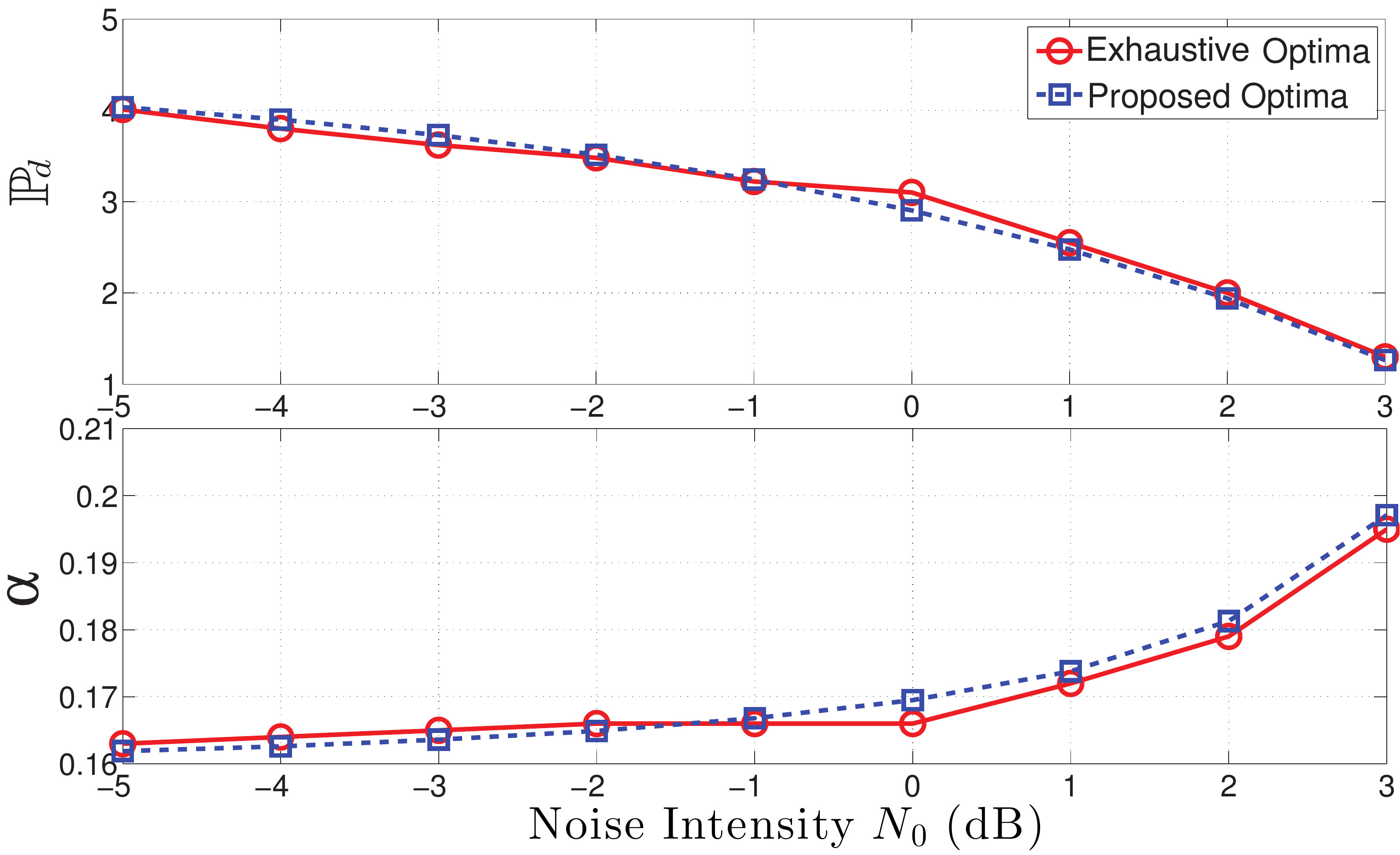}
\caption{\small  Power allocation comparison with various noise intensity $N_0$.}
\label{pa}
\end{figure}

In Fig. \ref{pa}, we plot the power allocation comparison in different noise intensity regimes. Here, the exhaustive optima is obtained by multiple exhaustive search to achieve the minimum overall secondary outage probability. ${\mathbb P}_d$ and $\alpha$ are given in the figure to conduct the comparison. First, we can see the allocated power values given by these two schemes are tightly matched, which indicates the significance of the proposed power allocation scheme. Second, the reason that the scenario $N_0>3$ dB is not given is due to the fact that secondary transmission is switched off to prevent interference to primary users in low SNR regime with the given system parameters. In cognitive relaying networks with high noise intensity, it is highly possible that the secondary transmission is turned down to provide protection to the primary transmission. By this, the high SNR approximation \eqref{app}  in the derivation of optimal $\{{\mathbb P}_s,{\mathbb P}_d \}$ is reasonable.

\begin{figure}[h]
\centering
\includegraphics[width=2.8in]{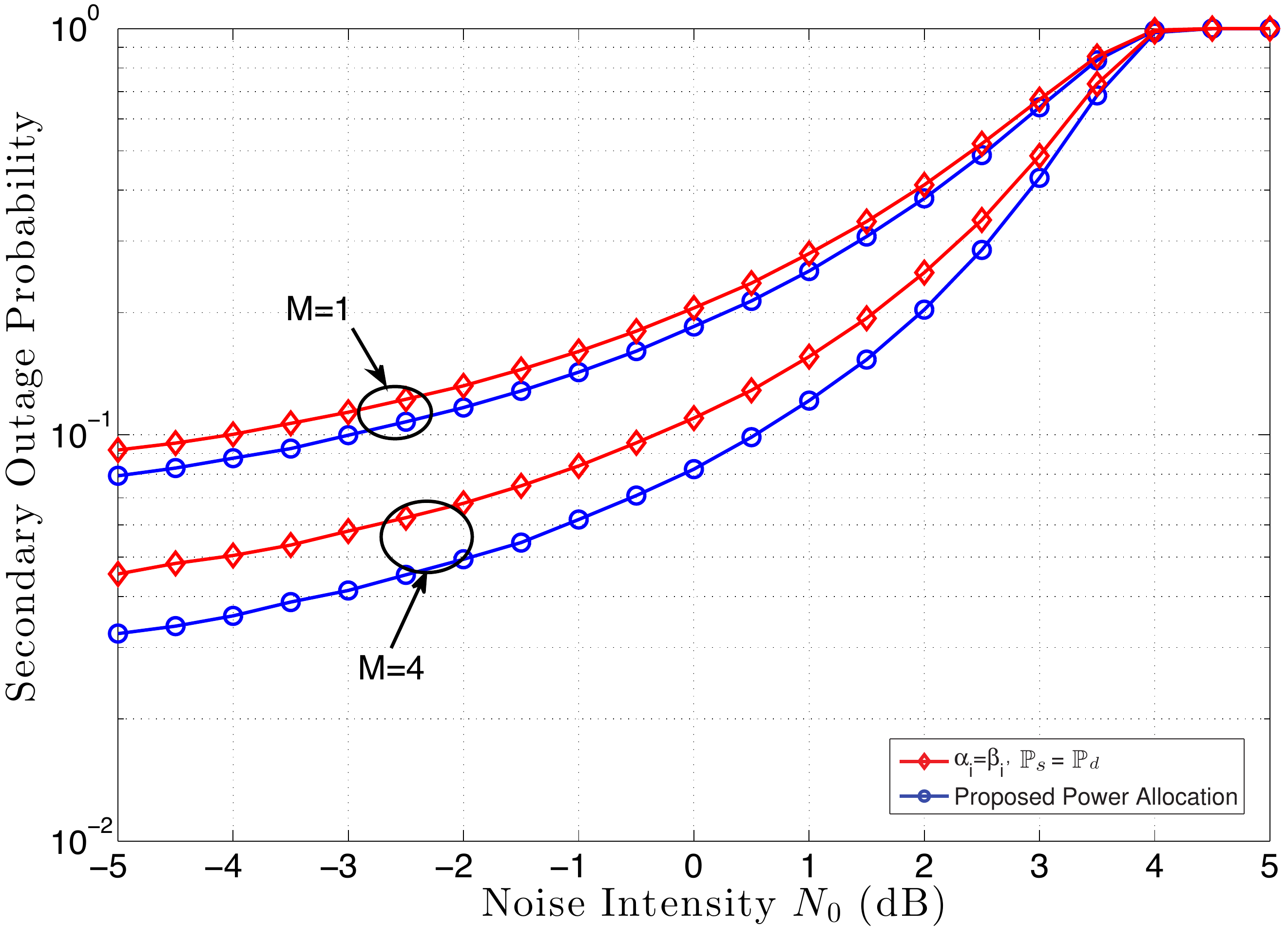}
\caption{\small  Secondary outage probability versus noise intensity $N_0$.}
\label{spa}
\end{figure}

Fig. \ref{spa} shows the secondary outage probability versus noise intensity $N_0$ corresponding to number of relays $M=2$ and $M=4$, respectively. It is observed that the outage performance worsens as noise gets more intense. Also, as the number of relays increases, the outage performance improves. We notice that for all the cases considered, the secondary outage probability approaches 1. This is because in the high $N_0$ scenario, $g = \max \{ {{\exp \left( { - \frac{{{\Delta _u}}}{{{\gamma _u}\sigma _{u,v}^2}}} \right)}}/{{(1 - {P_{th}})}},1\} \rightarrow1$. Then we will find that all the powers of the nodes in the secondary network approach 0. This means that the secondary network is transmitting data with extremely low power in a high noise environment and consequently the outage probability approaches 1. 
We also plot the performance of the secondary network with uniform-power scheme where ${\mathbb P}_s={\mathbb P}_d$ and $\alpha_i=\beta_i$, and this uniform power allocation scheme is widely adopted in two-way relay network literatures. We can see that the proposed power allocation scheme clearly leads to performance improvement compared with the uniform-power scheme, even though the power of relay nodes is both maximized in these two schemes. Another interesting finding is that the proposed power allocation scheme can result in higher relative performance when the secondary network has more relay nodes. Since the proposed scheme is designed to optimize the outage performance and power allocation of ${\mathbb P}_s$ and ${\mathbb P}_d$ considers all the relay channels, more relay nodes enhance the possibility that the given ${\mathbb P}_s$ and ${\mathbb P}_d$ can result in lower outage probability. Although we employ high SNR with asymptotic analysis as a part of our power allocation scheme, the performance improvement can also be seen in the high noise level regime. 

\section{Conclusion}
In this paper, we proposed an adaptive cooperative diversity scheme with power allocation and relay selection in cognitive two-way relay networks. The QoS of the primary network is given by the primary outage probability, which is guaranteed during the transmissions between the secondary users. The closed form of secondary outage probability was derived using the decode-and-forward protocol. To better understand the impact of primary interference on the secondary transmissions, we also investigated the asymptotic behaviors of secondary network when the primary SNR goes to infinity, including power allocation and outage probability. We have presented various simulation results to show the validation of the proposed cooperation scheme. 

The model and the analysis in this paper also suggest potential topics for future research, which include the resource allocation problem in multi-pair cognitive two-way relay networks with imperfect CSI, multiple and paired secondary transceivers, and so on.
\section{Acknowledgement}
We thank Dr. Thakshila Wimalajeewa for helpful discussions and insightful comments on the paper.
\appendices

\section{}\label{prop1}
We rewrite \eqref{2} using Bayes rule as
\begin{align}\label{4}
P(O(r_i)) = \int\limits_z {P'(O(r_i)) \cdot f\left( z \right)dz} ,
\end{align}
where 
$
P'(O(r_i))\triangleq P\left( {O({r_i})\left| { z} \right.} \right),
$
which represents the conditional probability of $P(O(r_i))$ given that ${\gamma _u}{{\left| {{h_{u,{r_i}}}} \right|}^2} + 1 = z$.
Thus, we can further express the conditional probability of $P'(O(r_i))$ given that ${\gamma _s}{\left| {{h_{s,{r_i}}}} \right|^2}=x$ as
\begin{align}
&P'(O(r_i){|{{{\bar \gamma }_s} = x}} )\nonumber\\
&= \left\{ {\begin{array}{*{20}{c}}
	{\ \ \ \ \ \ \ \ \ \ \ \ \ \ \ \ 1,\ \ \ \ \ \ \ \ \ \ \ \ \ \ \ \ \ \ \ \ \ \textrm{if}\ x < {\Delta _s}z}\\
	{P\left( {{{\bar \gamma }_d} < \Delta z - x\ \textrm{or}\ {{\bar \gamma }_d} < {\Delta _d}z} \right),\ \textrm{if}\ x > {\Delta _s}z}
	\end{array}} \right.\nonumber\\
	&= \left\{ {\begin{array}{*{20}{c}}
		{1,\ \ \ \ \ \textrm{if}\ x < {\Delta _s}z}\\
		{1 - \exp \left[ { - \frac{{\left( {\Delta z - x} \right)}}{{{\gamma _d}\sigma _{d,{r_i}}^2}}} \right],\ \ \textrm{if}\ {\Delta _s}z < x < \left( {\Delta  - {\Delta _d}} \right)z}\\
		{1 - \exp \left( { - \frac{{{\Delta _d}z}}{{{\gamma _d}\sigma _{d,{r_i}}^2}}} \right),\ \textrm{if}\ x > \left( {\Delta  - {\Delta _d}} \right)z}
		\end{array}} \right.,
		\end{align}
		with ${{\bar \gamma }_s}={\gamma _s}{\left| {{h_{s,{r_i}}}} \right|^2}$ and ${{\bar \gamma }_d}={\gamma _d}{\left| {{h_{d,{r_i}}}} \right|^2}$. Consequently, $P'(O(r_i))$ can be obtained by the following integration
		\begin{align}\label{3}
		&P'(O(r_i)) = \int\limits_x {P'(O({r_i})|{{\bar \gamma }_s} = x)f(x)dx} \nonumber\\
		&=  \left\{ {\begin{array}{*{20}{c}}
			{    1  -  \left( { 1  +  \frac{{{\Delta _s}{\Delta _d}z}}{{{\gamma _d}\sigma _{d,{r_i}}^2}}}  \right) \exp  \left( {  - \frac{{\Delta z}}{{{\gamma _d}\sigma _{d,{r_i}}^2}}}  \right),{\textrm{if}}\;\frac{{{\gamma _d}}}{{{\gamma _s}}} = \frac{{\sigma _{s,{r_i}}^2}}{{\sigma _{d,{r_i}}^2}}}\\
			{    1  -  C\exp \left( { - Az} \right)  -  \left( {1  -  C} \right)\exp \left( { - Bz} \right),{\textrm{otherwise}}}
			\end{array}} \right..
			\end{align}
			Substituting \eqref{3} into \eqref{4} and with the PDF $f\left( z \right) = \frac{1}{{{\gamma _u}\sigma _{u,{r_i}}^2}}\exp \left[ { - \frac{{\left( {z - 1} \right)}}{{{\gamma _u}\sigma _{u,{r_i}}^2}}} \right]$, we get \eqref{5}.

\section{}\label{prop2}
Let $\mathbb{A}$ and $\mathbb{B}$ respectively represent the probabilities as
\begin{align}
&\mathbb{A}= P\left( {\frac{{{{\mathbb P}_d}{{\left| {{h_{d,s}}} \right|}^2} + {\beta _{  i}}{{\mathbb P}_{{r_{  i}}}}{{\left| {{h_{{r_{  i}},s}}} \right|}^2}}}{{{{\mathbb P}_u}{{\left| {{h_{u,s}}} \right|}^2}}} > {\Delta _d}} \right)\nonumber\\
&= \frac{1}{{1 + \frac{{{{\mathbb P}_d}\sigma _{d,s}^2}}{{{{\mathbb P}_u}{\Delta _d}\sigma _{u,s}^2}}}}\frac{1}{{1 + \frac{{{\beta _{  i}}{{\mathbb P}_{{r_{  i}}}}\sigma _{{r_{  i}},s}^2}}{{{{\mathbb P}_u}{\Delta _d}\sigma _{u,s}^2}}}},
\end{align}
and
\begin{align}
&\mathbb{B}= P\left( {\frac{{{{\mathbb P}_s}{{\left| {{h_{s,d}}} \right|}^2} + {\alpha _{  i}}{{\mathbb P}_{{r_{  i}}}}{{\left| {{h_{{r_{  i}},d}}} \right|}^2}}}{{{{\mathbb P}_u}{{\left| {{h_{u,d}}} \right|}^2}}} > {\Delta _s}} \right)\nonumber\\
&= \frac{1}{{1 + \frac{{{{\mathbb P}_s}\sigma _{s,d}^2}}{{{{\mathbb P}_u}{\Delta _s}\sigma _{u,d}^2}}}}\frac{1}{{1 + \frac{{{\alpha _{  i}}{{\mathbb P}_{{r_{  i}}}}\sigma _{{r_{  i}},d}^2}}{{{{\mathbb P}_u}{\Delta _s}\sigma _{u,d}^2}}}}.
\end{align}
In the high SNR regime, i.e., $N_0 \to 0$, it is straightforward to see that $P(O({\rm ST}|r_{  i})) =\mathbb{A}+\mathbb{B}-\mathbb{A}\mathbb{B}$ from \eqref{16}, \eqref{17}, and \eqref{stout}, and we have the desired result.

\section{}\label{mulpdf}
The PDF of variable ${\mathbb {X}}$ can be expressed as
\begin{align}
&f\left( {\mathbb {X}} \right)  =\frac{{\partial F({\mathbb {X}})}}{{\partial {\mathbb {X}}}}\nonumber\\
&=\frac{{\partial P\left( {{\Delta _d}{\gamma _u}{{\left| {{h_{u,s}}} \right|}^2} + {\Delta _d} - {\gamma _d}{{\left| {{h_{d,s}}} \right|}^2} < {\mathbb {X}}} \right)}}{{\partial {\mathbb {X}}}}\nonumber\\
&=   \frac{{\partial   \int_0^\infty   {P \left(  {{\Delta _d}{\gamma _u}{{\left| {{h_{u,s}}} \right|}^2}   <  {\mathbb {X}}  +  z  -  {\Delta _d} \left| {  z} \right.} \right)f\left( z \right)dz} }}{{\partial {\mathbb {X}}}}\nonumber,
\end{align}
where
$
z={\gamma _d}{{\left| {{h_{d,s}}} \right|}^2} \nonumber
$
and
\begin{align}
&P\left( {{\Delta _d}{\gamma _u}{{\left| {{h_{u,s}}} \right|}^2} < {\mathbb {X}} + z - {\Delta _d}\left| { z} \right.} \right) = \nonumber\\
&\left\{ {\begin{array}{*{20}{c}}
	{0,\ \ \ \ \ \ {\textrm {if}}\ z < {\Delta _d} - {\mathbb {X}}}\\
	{1  -  \exp  \left( { - \frac{{{\mathbb {X}} + z - {\Delta _d}}}{{  {\Delta _d}{\gamma _u}\sigma _{u,s}^2}}} \right),{\textrm {if}}\ \left\{ {\begin{array}{*{20}{c}}
			{z > 0,{\mathbb {X}} > {\Delta _d}}\\
			{z > {\Delta _d} - {\mathbb {X}},{\mathbb {X}} < {\Delta _d}}
			\end{array}} \right.}
	\end{array}} \right.\nonumber.
\end{align}
Thus, the cumulative distribution function is written as
\begin{align}
F({\mathbb {X}})   =   \left\{ {\begin{array}{*{20}{c}}
	{     1  -  \frac{{{\Delta _d}{\gamma _u}\sigma _{u,s}^2}}{{{\Delta _d}{\gamma _u}\sigma _{u,s}^2  + {\gamma _d}\sigma _{d,s}^2}} \exp  \left(  {  \frac{{ {\Delta _d}-{\mathbb {X}}}}{{{\Delta _d}{\gamma _u}\sigma _{u,s}^2}}}  \right) ,{\textrm {if}}\ {\mathbb {X}}  >  {\Delta _d}}\\
	{\frac{{{\gamma _d}\sigma _{d,s}^2}}{{{\Delta _d}{\gamma _u}\sigma _{u,s}^2 + {\gamma _d}\sigma _{d,s}^2}}\exp \left( {\frac{{{\mathbb {X}} - {\Delta _d}}}{{{\gamma _d}\sigma _{d,s}^2}}} \right),{\textrm {if}}\ {\mathbb {X}} < {\Delta _d}}
	\end{array}} \right.\nonumber.
\end{align}
Therefore, we can have the PDF as
\begin{align}
f\left( {\mathbb {X}} \right)  =   \left\{ {\begin{array}{*{20}{c}}
	{\frac{1}{{{\Delta _d}{\gamma _u}\sigma _{u,s}^2 + {\gamma _d}\sigma _{d,s}^2}}\exp \left( {  \frac{{ {\Delta _d}-x}}{{{\Delta _d}{\gamma _u}\sigma _{u,s}^2}}} \right),{\textrm {if}}\ {\mathbb {X}} >  {\Delta _d}}\\
	{\frac{1}{{{\Delta _d}{\gamma _u}\sigma _{u,s}^2 + {\gamma _d}\sigma _{d,s}^2}}\exp \left( {\frac{{x - {\Delta _d}}}{{{\gamma _d}\sigma _{d,s}^2}}} \right),{\textrm {if}}\ {\mathbb {X}} < {\Delta _d}}
	\end{array}} \right.\nonumber.
\end{align}
Similarly, the PDF of variable ${\mathbb {Y}}$ can also be addressed as:
\begin{align}
f\left( \mathbb Y \right)  =   \left\{ {\begin{array}{*{20}{c}}
	{\frac{1}{{{\Delta _s}{\gamma _u}\sigma _{u,d}^2 + {\gamma _s}\sigma _{s,d}^2}}\exp \left( {  \frac{{ {\Delta _s}-\mathbb Y}}{{{\Delta _s}{\gamma _u}\sigma _{u,d}^2}}} \right),{\textrm {if}}\ \mathbb Y >  {\Delta _s}}\\
	{\frac{1}{{{\Delta _s}{\gamma _u}\sigma _{u,d}^2 + {\gamma _s}\sigma _{s,d}^2}}\exp \left( {\frac{{\mathbb Y - {\Delta _s}}}{{{\gamma _s}\sigma _{s,d}^2}}} \right),{\textrm {if}}\ \mathbb Y < {\Delta _s}}
	\end{array}} \right.\nonumber.
\end{align}

\section{}\label{pa_pspd}
To solve the optimization problem in \eqref{opt}, we discuss the solutions in the following two cases.

{\bf Case 1}: $\frac{{{\gamma _d}}}{{{\gamma _s}}} = \frac{{\sigma _{s,{r_{\text{min}}}}^2}}{{\sigma _{d,{r_{\text{min}}}}^2}}$.

In this case, $P(O(r_i))$ decreases monotonically as $\gamma_d$ increases. Substituting $\frac{{{\gamma _d}}}{{{\gamma _s}}} = \frac{{\sigma _{s,{r_{\text{min}}}}^2}}{{\sigma _{d,{r_{\text{min}}}}^2}}$ into primary QoS constraint $\mathbf{C}$ and we have $\{{\mathbb P}_s, {\mathbb P}_d\}$ expressed by $\{{\mathbb P}'_s, {\mathbb P}'_d\}$ as
\begin{align}\label{ps}
{{\mathbb P}'_s} = \sqrt {{{\left( {\frac{{\sigma _{d,{r_{\text{min}}}}^2}}{{2{\mathbf B}\sigma _{s,{r_{\text{min}}}}^2}} - \frac{1}{{2{\mathbf A}}}} \right)}^2}   +   \frac{{g\sigma _{d,{r_{\text{min}}}}^2}}{{{\mathbf A}{\mathbf B}\sigma _{s,{r_{\text{min}}}}^2}}}  - \frac{{\sigma _{d,{r_{\text{min}}}}^2}}{{2{\mathbf B}\sigma _{s,{r_{\text{min}}}}^2}} - \frac{1}{{2{\mathbf A}}},
\end{align}
and
$
{\mathbb P}'_d={\mathbb P}'_s\frac{\sigma_{s,r_{\text{min}}}^{2}}{\sigma_{d,r_{\text{min}}}^{2}},
$
where ${\mathbf A}=\frac{{{\Delta _u}\sigma _{s,v}^2}}{{{{\mathbb P}_u}\sigma _{u,v}^2}}$ and ${\mathbf B}=\frac{{{\Delta _u}\sigma _{d,v}^2}}{{{{\mathbb P}_u}\sigma _{u,v}^2}}$.

{\bf Case 2}: $\frac{{{\gamma _d}}}{{{\gamma _s}}} \ne \frac{{\sigma _{s,{r_{\text{min}}}}^2}}{{\sigma _{d,{r_{\text{min}}}}^2}}$.

By looking at \eqref{5} and constraint $\mathbf{C}$, it is analytically intractable to derive an explicit and closed form expression for optimized ${\mathbb P}_s$ and ${\mathbb P}_d$. Herein, we seek an asymptotic solution in the high SNR regime as $N_0\rightarrow 0$. Recall from \eqref{3} that in this case the conditional probability $P'(O(r_i))$ is given as
\begin{align}
P'(O(r_i))=1 - C\exp ( - Az) - (1 - C)\exp ( - Bz)
\end{align}
where $A = \frac{{{N_0}(\Delta  - {\Delta _d})}}{{{{\mathbb P}_s}\sigma _{s,{r_i}}^2}} + \frac{{{N_0}{\Delta _d}}}{{{{\mathbb P}_d}\sigma _{d,{r_i}}^2}}$, $B = \frac{{{N_0}{\Delta _s}}}{{{{\mathbb P}_s}\sigma _{s,{r_i}}^2}} + \frac{{{N_0}(\Delta  - {\Delta _s})}}{{{{\mathbb P}_d}\sigma _{d,{r_i}}^2}}$, $C = \frac{{{{\mathbb P}_s}\sigma _{s,{r_i}}^2}}{{{{\mathbb P}_s}\sigma _{s,{r_i}}^2 - {{\mathbb P}_d}\sigma _{d,{r_i}}^2}}$, and ${{\left| {{h_{u,{r_i}}}} \right|}^2} + 1 = z$. Therefore, we have the following approximation when $N_0\rightarrow 0$ as
\begin{align}\label{app}
P'(O(r_i))&\approx 1-C(1-Az)-(1-C)(1-Bz)\nonumber\\
&=ACz+B(1-C)z.
\end{align}

Consequently, we can obtain the approximate outage performance in high SNR regime as
\begin{align}
P(O(r_{\min}))  &= \int\limits_z {P'(O(r_{\min})) \cdot f\left( z \right)dz} \nonumber \\
&\approx \int\limits_z {(ACz+B(1-C)z) \cdot f\left( z \right)dz} \nonumber \\
&= \left( {{N_0} + {{\mathbb P}_u}\sigma _{u,{r_{{\rm{min}}}}}^2} \right)\left( {\frac{{{\Delta _s}}}{{{{\mathbb P}_s}\sigma _{s,{r_{{\rm{min}}}}}^2}} + \frac{{{\Delta _d}}}{{{{\mathbb P}_d}\sigma _{d,{r_{{\rm{min}}}}}^2}}} \right).
\end{align}

To find the optimal ${\mathbb P}_s$ and ${\mathbb P}_d$ in this case, we rewrite the constraint $\mathbf C$ in the following form
$
{\mathbf A}{\mathbf B}{\mathbb P}_s{\mathbb P}_d+{\mathbf A}{\mathbb P}_s+{\mathbf B}{\mathbb P}_d+1=g.
$
To find the optimal ${\mathbb P}_s$ and ${\mathbb P}_d$ that minimize $P(O(r_{\min})) $ while satisfying constraint $\mathbf C$, we constitute the Lagrange function as
\begin{align}
{\mathbf L}=\left( {{N_0} + {{\mathbb P}_u}\sigma _{u,{r_{{\rm{min}}}}}^2} \right)\left( {\frac{{{\Delta _s}}}{{{{\mathbb P}_s}\sigma _{s,{r_{{\rm{min}}}}}^2}} + \frac{{{\Delta _d}}}{{{{\mathbb P}_d}\sigma _{d,{r_{{\rm{min}}}}}^2}}} \right) \nonumber\\+ \lambda \left( {\mathbf A}{\mathbf B}{\mathbb P}_s{\mathbb P}_d+{\mathbf A}{\mathbb P}_s+{\mathbf B}{\mathbb P}_d+1-g \right),
\end{align}
with $\lambda$ being the Lagrange multiplier. By solving
\begin{align}
\frac{{\partial \mathbf{L}}}{{\partial {\gamma_s}}} = 0,  \frac{{\partial \mathbf{L}}}{{\partial {\gamma_d}}} = 0,
\end{align}
and considering $\mathbf C$,
we can obtain the power allocation $\{{\mathbb P}_s, {\mathbb P}_d\}$ expressed by $\{{\mathbb P}''_s, {\mathbb P}''_d\}$ in this case as
$
{{\mathbb P}''_s} = \frac{{g - 1}}{{\sqrt {\frac{{{\Delta _d}\sigma _{s,{r_{\text{min}}}}^2}}{{{\Delta _s}\sigma _{d,{r_{\text{min}}}}^2}}\mathbf A \mathbf B g}  + \mathbf A}},
$
and
$
{{\mathbb P}''_d} = \frac{{g - 1}}{{\sqrt {\frac{{{\Delta _s}\sigma _{d,{r_{\text{min}}}}^2}}{{{\Delta _d}\sigma _{s,{r_{\text{min}}}}^2}}\mathbf A \mathbf B g}  + \mathbf B}}.
$

\section{}\label{pa_pri}
During the second transmission phase, the best relay forwards the signals to the STs $s$ and $d$. At the same time, this transmission also causes interference at the primary receivers and the corresponding received signal is expressed as
\begin{align}
{y_v} = \sqrt {{{\mathbb P}_u}} {h_{u,v}}{x_u} + \sqrt {{{\mathbb P}_{{r_{  i}}}}} {h_{{r_{  i}},v}}{x_{{r_{  i}}}} + {n_v}.
\end{align}

Thus, the primary QoS guarantee with respect to outage probability constraint can be written as
\begin{align}
{P_{uv}} = P\left( {\frac{{{{\mathbb P}_u}{{\left| {{h_{u,v}}} \right|}^2}}}{{{{\mathbb P}_{{r_{  i}}}}{{\left| {{h_{{r_{  i}},v}}} \right|}^2} + {N_0}}} < {\Delta _u}} \right) \le {P_{th}},
\end{align}
which can calculated and expressed as
$
{P_{uv}} = 1 - \frac{{\exp \left( { - \frac{{{\Delta _u}{N_0}}}{{{{\mathbb P}_u}\sigma _{u,v}^2}}} \right)}}{{\frac{{{\Delta _u}{{\mathbb P}_{{r_{  i}}}}\sigma _{{r_{  i}},v}^2}}{{{{\mathbb P}_u}\sigma _{u,v}^2}} + 1}} \le {P_{th}}.
$
Therefore, we obtain the transmit power limit of the best relay $r_{  i}$ as
$
{{\mathbb P}_{{r_{  i}}}} \le \frac{{{{\mathbb P}_u}\sigma _{u,v}^2}}{{{\Delta_u}\sigma _{{r_{  i}},v}^2}}(g-1).
$
Note that this is the limit for the transmit power of the best relay and, therefore, we have the optimal ${{\mathbb P}_{{r_{  i}}}}$  when equality is attained, leading to the desired result.

\section{}\label{al_be}
In order to find the optimal values of ${\alpha_{  i}}$ and ${\beta_{  i}} $ that minimize $P(O({\rm ST}|r_{  i}))$ in \eqref{pop}, we construct the Lagrange function as
$
\mathbb{L}= \mathbb{A}+\mathbb{B}-\mathbb{A}\mathbb{B} + \lambda \left( {\alpha_{  i} + \beta_{  i}  - 1} \right),
$
where $\mathbb A$ and $\mathbb B$ are given in \eqref{pop}.

The optimal values of $\alpha_{  i}$ and $\beta_{  i}$ satisfy the equations below
\begin{align}
\frac{{\partial \mathbb{L}}}{{\partial {\alpha_{  i}}}} = 0,  \frac{{\partial \mathbb{L}}}{{\partial {\beta_{  i}}}} = 0, \alpha_{  i} + \beta_{  i}  =1,
\end{align}
which result in the given expressions in the lemma.

\section{}\label{as_pspd}
Following the derivation of Lemma \ref{lemma1}, we begin the asymptotic power allocation of $\{\gamma_s, \gamma_d \}$.

{\bf Case 1}: $\frac{{{\gamma _d}}}{{{\gamma _s}}} = \frac{{\sigma _{s,{r_{\text{min}}}}^2}}{{\sigma _{d,{r_{\text{min}}}}^2}}$.

Let $\{\gamma'_s,\gamma'_d\}$ represent the power allocation scheme in this case. Rewrite \eqref{ps} using the first order Taylor expansion, $\gamma'_s$ is expressed as
\begin{align}
{{\gamma '}_s} &= \frac{g'}{2}\frac{1}{{\frac{{{\Delta _u}\sigma _{s,v}^2}}{{{\gamma _u}\sigma _{u,v}^2}} + \frac{{{\Delta _u}\sigma _{d,v}^2}}{{{\gamma _u}\sigma _{u,v}^2}}\frac{{\sigma _{s,{r_{{\rm{min}}}}}^2}}{{\sigma _{d,{r_{{\rm{min}}}}}^2}}}} \nonumber\\
&= \frac{{{\gamma _u}\sigma _{u,v}^2g'}}{{2{\Delta _u}\left( {\sigma _{s,v}^2 + \sigma _{d,v}^2\frac{{\sigma _{s,{r_{{\rm{min}}}}}^2}}{{\sigma _{d,{r_{{\rm{min}}}}}^2}}} \right)}} \triangleq {{\rho '}_s}{\gamma _u}.
\end{align}
Accordingly, $\gamma'_d$ can also be given
$
{{\gamma '}_d} = {{\gamma '}_s}\frac{{\sigma _{s,{r_{{\rm{min}}}}}^2}}{{\sigma _{d,{r_{{\rm{min}}}}}^2}} \triangleq {{\rho '}_d}{\gamma _u}.
$
Note that $\rho'_s$ and $\rho'_d$ are only associated with statistical channel conditions.

{\bf Case 2}: $\frac{{{\gamma _d}}}{{{\gamma _s}}} \ne \frac{{\sigma _{s,{r_{\text{min}}}}^2}}{{\sigma _{d,{r_{\text{min}}}}^2}}$.

In this case, the power allocation is represented by $\{\gamma''_s, \gamma''_d \}$, which is expressed based on the expressions of  $\{{\mathbb P}''_s, {\mathbb P''}_d\}$ as
\begin{align}
{{\gamma ''}_s} = \frac{{\left( {g' - 1} \right)\sigma _{u,v}^2{\gamma _u}}}{{{\Delta _u}\sigma _{s,v}^2\sqrt {g'\frac{{{\Delta _d}\sigma _{s,{r_{{\rm{min}}}}}^2\sigma _{d,v}^2}}{{{\Delta _s}\sigma _{d,{r_{{\rm{min}}}}}^2\sigma _{s,v}^2}} + 1} }} \triangleq {{\rho ''}_s}{\gamma _u}
\end{align}
and
\begin{align}
{{\gamma ''}_d} = \frac{{\left( {g' - 1} \right)\sigma _{u,v}^2{\gamma _u}}}{{{\Delta _u}\sigma _{d,v}^2\sqrt {g'\frac{{{\Delta _s}\sigma _{d,{r_{{\rm{min}}}}}^2\sigma _{s,v}^2}}{{{\Delta _d}\sigma _{s,{r_{{\rm{min}}}}}^2\sigma _{d,v}^2}} + 1} }} \triangleq {{\rho ''}_d}{\gamma _u},
\end{align}
where $\rho''_s$ and $\rho''_d$ are also associated with statistical channel conditions.

\bibliographystyle{IEEEtran}
\bibliography{IEEEabrv,References}

%







\begin{IEEEbiography}[{\includegraphics[width=1in,height=1.25in,clip,keepaspectratio]{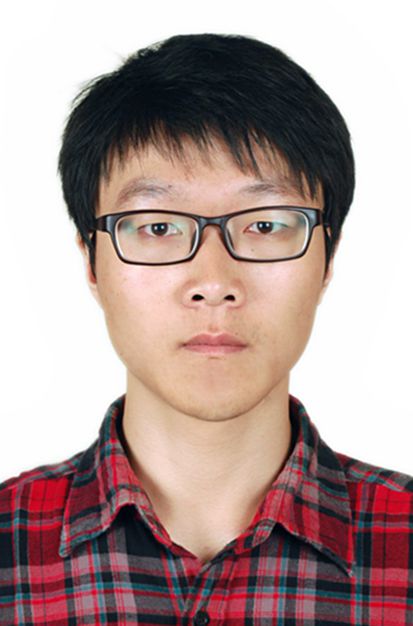}}]{Qunwei~Li}
	(S'16) received the B.S.\ and M.S.\ degrees in electrical engineering with honors from Xidian University, Xi'an, China, in 2011 and 2014. He has been pursuing the Ph.D.\ degree in the Department of Electrical Engineering and Computer Science, Syracuse University since 2014. 
	
	He received the Syracuse University Graduate Fellowship award in 2014. His research interests include statistical signal processing, crowdsourcing, machine learning, and optimization. 
\end{IEEEbiography}

\begin{IEEEbiography}[{\includegraphics[width=1in,height=1.25in,clip,keepaspectratio]{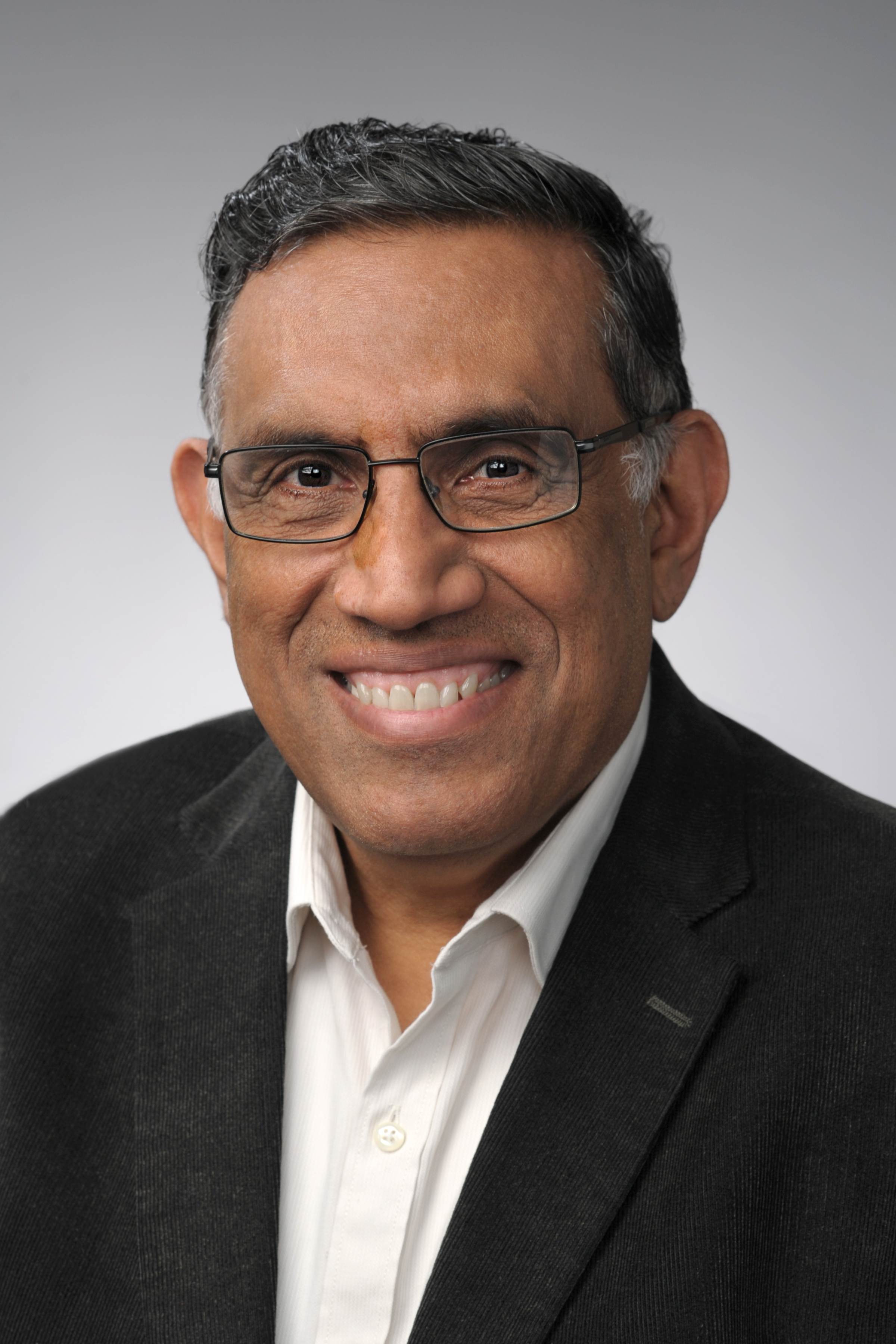}}]{Pramod~K.~Varshney}
	(S'72-M'77-SM'82-F'97) was born in Allahabad, India, on July 1, 1952. He received the B.S.\ degree in electrical engineering and computer science (with highest honors), and the M.S. and Ph.D. degrees in electrical engineering from the University of Illinois at Urbana-Champaign, USA, in 1972, 1974, and 1976 respectively.
	
	From 1972 to 1976, he held teaching and research assistantships with the University of Illinois. Since 1976, he has been with Syracuse University, Syracuse, NY, where he is currently a Distinguished Professor of Electrical Engineering and Computer Science and the Director of CASE: Center for Advanced Systems and Engineering. He served as the associate chair of the department from 1993 to 1996. He is also an Adjunct Professor of Radiology at Upstate Medical University, Syracuse. His current research interests are in distributed sensor networks and data fusion, detection and estimation theory, wireless communications, image processing, radar signal processing, and remote sensing. He has published extensively. He is the author of Distributed Detection and Data Fusion (New York: Springer-Verlag, 1997). He has served as a consultant to several major companies. 
	
	Dr.\ Varshney was a James Scholar, a Bronze Tablet Senior, and a Fellow while at the University of Illinois. He is a member of Tau Beta Pi and is the recipient of the 1981 ASEE Dow Outstanding Young Faculty Award. He was elected to the grade of Fellow of the IEEE in 1997 for his contributions in the area of distributed detection and data fusion. He was the Guest Editor of the Special Issue on Data Fusion of the Proceedings of the IEEE January 1997. In 2000, he received the Third Millennium Medal from the IEEE and Chancellor's Citation for exceptional academic achievement at Syracuse University. He is the recipient of the IEEE 2012 Judith A. Resnik Award, the degree of Doctor of Engineering honoris causa by Drexel University in 2014 and the ECE Distinguished Alumni Award from UIUC in 2015. He is on the Editorial Boards of the Journal on Advances in Information Fusion and IEEE Signal Processing Magazine. He was the President of International Society of Information Fusion during 2001.
\end{IEEEbiography}

\end{document}